 \definecolor{BLACK}{gray}{0}
 \definecolor{WHITE}{gray}{1}
 \definecolor{RED}{rgb}{1,0,0}
 \definecolor{GREEN}{rgb}{0,1,0}
 \definecolor{BLUE}{rgb}{0,0,1}
 \definecolor{CYAN}{cmyk}{1,0,0,0}
 \definecolor{MAGENTA}{cmyk}{0,1,0,0}
 \definecolor{YELLOW}{cmyk}{0,0,1,0}
\pacs{03.65.Aa, 03.65.Ta, 03.67.Mn}
\newcommand{\scalar}[2]{\langle #1 | #2 \rangle}
\newcommand{\ketbra}[2]{| #1 \rangle \langle #2 |}
\newcommand{\ket}[1]{| #1 \rangle}
\newcommand{\bra}[1]{\langle #1 |}
\newcommand{\1}{{\rm 1\hspace{-0.9mm}l}}
\newcommand{\id}{\1}
\newcommand{\Cplx}{\mathbb{C}}
\newcommand{\Real}{\mathbb{R}}
\newtheorem{prop}{Proposition}
\newtheorem{theorem}{Theorem}
\newtheorem{lemma}{Lemma}
\newtheorem{definition}{Definition}
\newtheorem{conjecture}{Conjecture}
\newtheorem{corollary}{Corollary}
\begin{document}

\date{ver. 20, July 14, 2015}

\title{Certainty relations, mutual entanglement and non-displacable
manifolds}

\author{Zbigniew Pucha\l a}

\affiliation{Institute of Theoretical and Applied Informatics, Polish Academy
of Sciences, Ba\l tycka 5, 44-100 Gliwice, Poland}

\affiliation{Institute of Physics, Jagiellonian University, ul Reymonta 4, 30-059
Krak\'ow, Poland}

\author{\L ukasz Rudnicki}

\affiliation{Institute for Physics, University of Freiburg, Rheinstra\ss e 10, D-79104
Freiburg, Germany}

\affiliation{Center for Theoretical Physics, Polish Academy of Sciences, Aleja
Lotnik\'ow 32/46, PL-02-668 Warsaw, Poland}

\author{Krzysztof Chabuda}

\affiliation{Center for Theoretical Physics, Polish Academy of Sciences, Aleja
Lotnik\'ow 32/46, PL-02-668 Warsaw, Poland}

\author{Miko\l aj Paraniak}

\affiliation{Center for Theoretical Physics, Polish Academy of Sciences, Aleja
Lotnik\'ow 32/46, PL-02-668 Warsaw, Poland}

\author{Karol \.{Z}yczkowski}

\email{karol.zyczkowski@uj.edu.pl}

\affiliation{Institute of Physics, Jagiellonian University, ul Reymonta 4, 30-059
Krak\'ow, Poland}

\affiliation{Center for Theoretical Physics, Polish Academy of Sciences, Aleja
Lotnik\'ow 32/46, PL-02-668 Warsaw, Poland}
\begin{abstract}

We derive explicit bounds for the average entropy
characterizing measurements of a pure quantum state of size $N$
in $L$ orthogonal bases. Lower bounds lead to novel entropic uncertainty relations,
while upper bounds allow us to formulate universal certainty relations.
For $L=2$ the maximal average entropy saturates at $\log N$
as there exists a mutually coherent state, 
but certainty relations 
are shown to be nontrivial for  $L \ge 3$ measurements.
In the case of a prime power dimension, $N=p^k$, 
and the number of measurements $L=N+1$, 
the upper bound for the average entropy becomes minimal
for a collection of mutually unbiased bases. 
Analogous approach is used to study entanglement with respect
to $L$ different splittings of a composite system, linked by bi-partite
quantum gates. We show that for any two-qubit unitary gate $U\in \mathcal{U}(4)$ there exist states  being mutually separable or mutually entangled with respect to
both splittings (related by $U$) of the composite system. The latter statement follows from
the fact that the real projective space $\mathbb{R}P^{3}\subset\mathbb{C}P^{3}$
is non-displacable. For $L=3$ splittings the maximal sum of $L$
entanglement entropies is conjectured to achieve its minimum for a collection of three
mutually entangled bases, formed by two mutually entangling gates. 
\end{abstract}
\maketitle
\section{Introduction}

Quantum uncertainty relations characterizing the ultimate limitations
\cite{Lahti,IBBLR} which nature puts on the preparation and measurements
of any quantum state continuously attract a significant attention.
Much of it is focused on the entropic formulation of uncertainty
\cite{deVicente,deVicenteComm,PRZ13,FGG13,CP14,RPZ14,Korzekwa1,Bosyk1,Bosyk2,Bosyk3,Kaniewski,Banach},
since the information entropy function is a clever collection of the
information contained in all the moments of the probability
distribution.
Standard entropic uncertainty relations \cite{BBM75,Deutsch,MU88}
provide a lower bound for the sum of entropies characterizing information
obtained in two arbitrary orthogonal measurements. Various generalizations
including positive-operator valued measures (POVM) \cite{Bosyk3},
coarse graining \cite{IBB1,IBB2,OptCon,LR2015}, quantum memory
\cite{Berta,ColFur},
different trade-off relations \cite{Roga,Rastegin}, or even quasi-hermitian
operators \cite{quasi} and elaborate studies devoted to quantum protocols
\cite{Protokoly}, can be found in the literature related to the topic
discussed in this paper.

On the contrary, 
no comparable 
effort had been made in
order to establish relevant \emph{certainty relations} \cite{Sa93,Sa95}
given as an upper bound for the sum of the two entropies in question.
Even though, for more than two measurements described in terms of
mutually unbiased bases (MUB) almost optimal entropic certainty relations
have been derived \cite{Sa93,Sa95,Molmer,IEEE2001,Wehner}, there were
no corresponding results valid for two arbitrary measurements. The
two very recent contributions \cite{KJR14,Wolf} independently solved
that long-standing 
problem. 

An equator of a sphere is non-displacable as any two
great circles do intersect.
Such a statement can be generalized in various ways for higher dimensions.
Making use of the fact that a great torus $T_{N-1}$ embedded
in a complex projective space $\mathbb{C}P^{N-1}$
is nondisplacable with respect to transformations 
by a unitary $U\in \mathcal{U}(N)$ \cite{Cho04}, 
it is possible to show that there exists a quantum state 
{\sl mutually unbiased} with respect to both bases. 
 We shall further refer to such kind of state as  being {\sl mutually coherent}.

In this work we analyze upper bounds for the average entropy involving an arbitrary number of orthogonal measurements. We 
derive a universal certainty relation valid for any set of $L$ measurements in an $N$ dimensional Hilbert space.
Assuming that $N$ is a power of prime,
we further analyze the case of mutually unbiased bases, for which
we conjecture that the difference between the upper
and the lower limits is the smallest among all orthogonal 
measurements in $N+1$ bases.
An analogous statement that the variance of the Shannon entropy 
is minimal for MUBs is based on numerical results, while
a counterpart proposition for the Tsallis entropy of order two (also called the linear entropy)
is analytically proven.

The parallel aim of the paper is to study 
average entanglement of a given bipartite quantum state,
computed with respect to an arbitrary collection of $L$ 
splittings of the Hilbert space, related by global unitary matrices.
Relaying on the fact that the real projective space 
$\mathbb{R}P^{3}$ is non-displacable in $\mathbb{C}P^{3}$
we show that for any two splittings of ${\cal H}_4$
into two subspaces of size two, there exists a 
{\sl mutually entangled state}, maximally entangled
with respect to both partitions.
Numerical results allow us to conjecture 
that the same statement can be true for $N \times N$ systems. 

This work is organized as follows.
In section II we introduce necessary notation
and discuss trivial certainty relations for two orthogonal measurements. 
Some consequences of this result are further investigated in Section III.
In Section \ref{sec:mutual-coherence}  we discuss  mutual
coherence in the situation with more than two measurements and derive the certainty and uncertainty relations relevant for  any choice of $L$ and $N$. 
In Sections \ref{sec:mutually-entangled-states} and \ref{sec:Mutually-entangling-gates}
we explore connections between the concept of mutually coherent states
and quantum entanglement by searching for  mutually entangled states and 
mutually entangling gates.

\section{Certainty relations and mutually coherent states}

Consider a quantum state $|\psi\rangle\in{\cal H}_{N}$
belonging to an $N$-dimensional Hilbert space $\mathcal{H}_{N}$
which is measured in several orthonormal bases determined by unitary
matrices $\left\{ U_{k}\right\} $. These bases (each of them forms
the columns of a particular $U_{k}$), are eigenbases of some observables
standing behind the measurements. Information gained in that process
can be described by the Ingarden--Urbanik entropy \cite{IU62} 
\begin{equation}
S^{IU}(|\psi\rangle,U_{k})=S_{k}=-\sum_{i=1}^{N}p_{i}^{(k)}\log p_{i}^{(k)},
\end{equation}
which is the Shannon entropy calculated for the probability distribution
$p_{i}^{(k)}=|\langle i|U_{k}|\psi\rangle|^{2}$. 
The choice of the base of the logarithm is arbitrary, but in numerical 
calculations we will use natural logarithms.
The entropy $S_{k}$
is a non-negative quantity upper-bounded by $\log N$.
The question
about the uncertainty and certainty relations for the two measurements
(given in terms of $U_{1}$ and $U_{2}$) is devoted to the two numbers
(or rather functions of $N$ and $U_{2}U_{1}^{\dagger}$) $B_{{\rm min}}$
and $B_{{\rm max}}$ to be determined, such that
\begin{equation}
0\le B_{{\rm min}}\le \frac{S_{1}+S_{2}}{2}\le B_{{\rm max}}\le \log N.
\end{equation}
The matrix $U_{2}U_{1}^{\dagger}$ is a single quantity that matters here
since by the transformation $\ket{\psi}\mapsto U_1^\dagger \ket{\psi'}$ one can always bring the first unitary to be the identity $\1$,
while all other matrices become  multiplied by $ U_{1}^{\dagger}$.
A profound (though very rarely close to optimal) example of a valid
$B_{{\rm min}}$ is $-\max_{i,j}\log c_{ij}$, with $c_{ij}$ denoting
the modulus of the matrix element of $U_{2}U_{1}^{\dagger}$, situated
in $i$-th row and $j$-th column. This is the well known Maassen-Uffink
result \cite{MU88}.

\begin{figure}
\centering
\includegraphics[width=\linewidth]{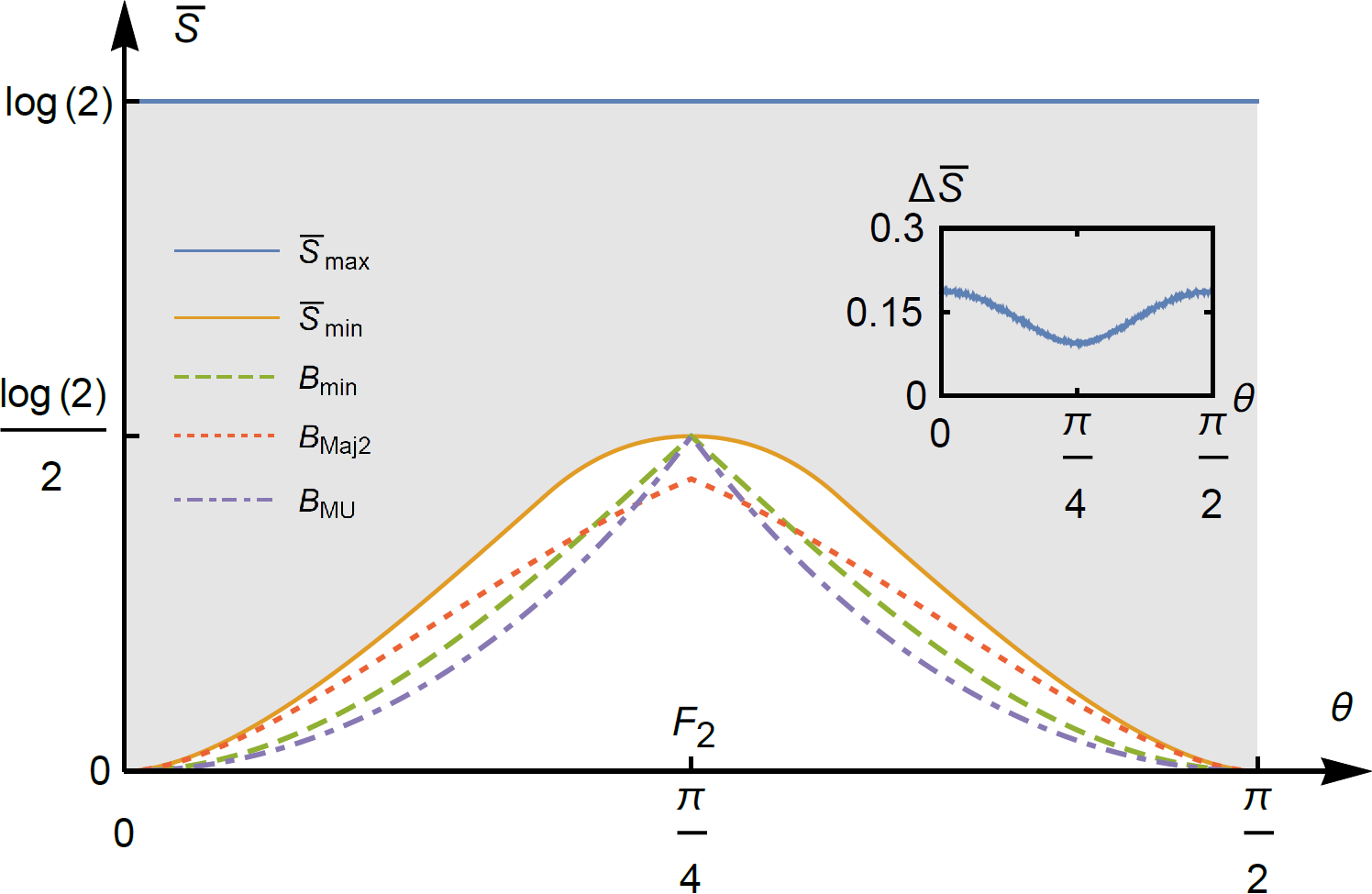}
\caption{Average entropy ${\bar S}$
for two single--qubit measurements related by an orthogonal matrix
$O\in O(2)$ as a function of the rotation angle $\theta$.
Solid lines denote numerical lower and upper bounds which limit the allowed region 
(shaded), while dotted lines represent Massen Uffink and majorization lower bounds. Dashed line corresponds to the bound $B_\textrm{min}$ derived in this paper.
The inset shows root mean square of the entropy $\Delta {\bar S}$
averaged over the set of pure states as a function of the angle $\theta$.}
\label{fig:SN2L2}
\end{figure}

In Fig. \ref{fig:SN2L2} we show behavior of the
minimal and maximal average entropy  ${\bar S}=(S_1+S_2)/2$
for an orthogonal matrix $O=[\cos\theta, \sin\theta; -\sin\theta, \cos\theta]$, 
as a function of the rotation angle $\theta$. Note that the case of any unitary matrix of order $N=2$
is equivalent to a certain orthogonal matrix \cite{PRZ13}.
Even for this simple family, the minimal values 
${\bar S}_{\rm min}$ are rather cumbersome \cite{SR88}.
${\bar S}_{\rm min}$ lays obviously above  the Maassen--Uffink bound  \cite{MU88}, as well as
the majorization bound $B_\textrm{Maj2}$ derived in \cite{RPZ14}. In Fig. \ref{fig:SN2L2} we also plot in advance our candidate for $B_\textrm{min}$, which in Section IV is derived for an arbitrary setting described in general by $L$ unitaries $U_1,\ldots,U_L\in \mathcal{U}(N)$.
The lower bound assumes the largest value for $\theta =\pi/4$,
for which the matrix $O$ coincided with the Hadamard matrix.

On the other hand, the upper bound occurs to be trivial,
as the maximal value is always saturated, 
$B_{\rm max}={\bar S}_{\rm max}= \log 2$. This is a direct consequence of a more general statement mentioned already in the Introduction, saying
that for any choice of a unitary matrix
$U\in \mathcal{U}(N)$  one can always find a state $|\psi_{\rm coh}\rangle$ 
of the form $(1,e^{i\phi_{2}},\dots,e^{i\phi_{N}})/\sqrt{N}$
such that all probabilities are equal, 
$|\langle i|\psi_{\rm coh}\rangle|^2=|\langle i|U|\psi_{\rm coh}\rangle|^2=1/N$
\cite{KJR14,Wolf}.
This leads to the upper bound $B_{{\rm max}}= \log N$. 
Hence for any two orthogonal measurements in any dimension $N$
there exist no non-trivial upper bounds  and certainty relations. 
This counter-intuitive statement 
follows from the fact that the great tori $T^{N-1}$ embedded in the set of
pure states $\mathbb{C}P^{N-1}$ is nondisplacable with respect to action of
$\mathcal{U}(N)$~\cite{Cho04,Tam08}. More intuitively, it is a generalization of an easy
fact that any two great circles on a sphere do intersect. Therefore the torus $T^{N-1}$
of basis coherent states \cite{BCP14}
($\phi_{1}\equiv0$) 
\[
|\psi_{{\rm coh}}^{\boldsymbol{\phi}}\rangle=\frac{1}{\sqrt{N}}\sum_{j=1}^{N}e^{i\phi_{j}}|j\rangle,
\]
and the torus $U|\psi_{{\rm coh}}^{\boldsymbol{\phi}}\rangle$, defined
in terms of any unitary matrix $U$ (which in the above scheme is
equal to $U_{2}U_{1}^{\dagger}$), of states coherent with respect
to the transformed basis do intersect \cite{KJR14}. An arbitrary
point from the intersection represents a \textsl{mutually coherent state},
coherent with respect to both bases, also called 'zero-noise zero-disturbance
state' \cite{KJR14}.
Recent investigations show \cite{Ingem15}
that for $N=3$ both two-tori generically cross in $6$ or $4$ discrete points
-- see Fig. \ref{fig:torus2}

\begin{figure}
\begin{centering}
\includegraphics[scale=0.33]{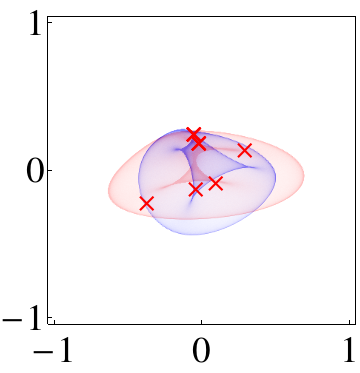}\,\,\,\includegraphics[scale=0.33]{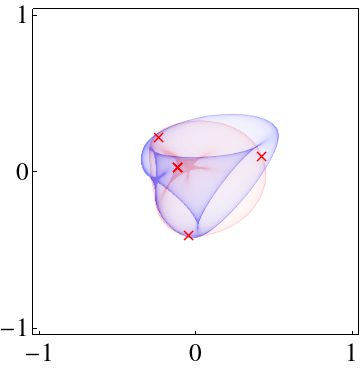}
\par\end{centering}
\protect
\caption{
Two examples of projections of two tori $T^2$ and $U (T^2)$ embedded in $\Cplx P^2$
on a plane. Here $U\in \mathcal{U}(3)$ and crosses denote the intersection points. 
}
\label{fig:torus2}
\end{figure}

 Note that the density matrix $|\psi_{{\rm coh}}^{\boldsymbol{\phi}}\rangle\langle\psi_{{\rm coh}}^{\boldsymbol{\phi}}|$
written in both bases is\emph{ }\textsl{\emph{contradiagonal}} \cite{LPZ14},
as it has all diagonal elements equal. Hence taking into account permutations of the
spectrum, the basis coherent states
are as distant from the diagonal density matrices as possible 
at a single orbit of unitarily similar states.
It is important to emphasize
here the difference between the set of basis coherent states, which
forms a torus, and the set of \textsl{\emph{spin coherent states}}
\cite{ZFG90}, or more generally $\mathcal{SU}(K)$ coherent states, producing
a complex projective space $\mathbb{C}P^{K-1}\subset\mathbb{C}P^{N-1}$.

\section{Further consequences of trivial certainty relation}

The fact that $B_{{\rm max}} = \log N$ implies few interesting consequences.
Assume that Alice possesses a maximally coherent (with respect to
her computational basis $|1\rangle,\ldots,|N\rangle$) quantum state
$|\psi_{{\rm coh}}^{\boldsymbol{\phi}}\rangle$ with tunable parameters
$\phi_{2},\ldots,\phi_{N}$. The two immediate corollaries follow:
\begin{corollary}[Sharing] For any choice of the basis on the Bob
side, Alice can always tune the phases in a way that she 
can share the maximal coherence of the state
 $|\psi_{{\rm coh}}^{\boldsymbol{\phi}}\rangle$
with Bob.
\end{corollary}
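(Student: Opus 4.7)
The plan is to identify Alice's freedom in choosing $\phi_{2},\ldots,\phi_{N}$ with a point on the torus $T^{N-1}$ of basis coherent states in her computational basis, and to realize Bob's choice of basis as a unitary $U\in\mathcal{U}(N)$ relating his basis to Alice's. Maximal coherence on Bob's side corresponds to the state belonging to the torus $U(T^{N-1})\subset\mathbb{C}P^{N-1}$. Sharing is then equivalent to the statement that these two tori intersect inside $\mathbb{C}P^{N-1}$.

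First I would make the correspondence explicit: writing $|\psi_{{\rm coh}}^{\boldsymbol{\phi}}\rangle=\frac{1}{\sqrt{N}}\sum_{j}e^{i\phi_{j}}|j\rangle$, the probabilities in Alice's basis are automatically $|\langle j|\psi_{{\rm coh}}^{\boldsymbol{\phi}}\rangle|^{2}=1/N$ by construction, so Alice's coherence is fixed regardless of the tunable phases. What remains is to ensure that one can pick $\boldsymbol{\phi}$ so that simultaneously $|\langle j|U^{\dagger}|\psi_{{\rm coh}}^{\boldsymbol{\phi}}\rangle|^{2}=1/N$ for every $j$, i.e.\ the state is also maximally coherent in Bob's basis.

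Next I would invoke the non-displaceability theorem of Cho \cite{Cho04} (as already used earlier in the excerpt): the great torus $T^{N-1}\subset\mathbb{C}P^{N-1}$ cannot be displaced from itself by any unitary $U\in\mathcal{U}(N)$, so $T^{N-1}\cap U(T^{N-1})\neq\emptyset$. Equivalently, the existence result for mutually coherent states established in the previous section, and used to conclude $B_{\max}=\log N$, directly provides a vector $|\psi_{\rm coh}\rangle$ with the required equal-modulus property in both bases. Alice simply reads off the phases $\phi_{2},\ldots,\phi_{N}$ of this mutually coherent state and prepares $|\psi_{{\rm coh}}^{\boldsymbol{\phi}}\rangle$; by construction Bob then observes a flat distribution as well, so the maximal coherence is shared.

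This really is a direct corollary rather than a new theorem: the only substantive input is the intersection of the two tori, which is a purely topological fact, and it has already been established above. Consequently no genuine obstacle arises; the only point requiring minor care is checking that the intersection point indeed admits a representative of the form $(1,e^{i\phi_{2}},\ldots,e^{i\phi_{N}})/\sqrt{N}$, which follows by fixing the overall phase so that $\phi_{1}\equiv 0$, exactly as in the parametrization of $T^{N-1}$ introduced earlier.
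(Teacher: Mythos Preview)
Your proposal is correct and follows exactly the paper's approach: the paper presents this as an ``immediate corollary'' of the existence of mutually coherent states, which in turn was derived from the non-displaceability of the great torus $T^{N-1}\subset\mathbb{C}P^{N-1}$ under $\mathcal{U}(N)$, and you have simply spelled out this implication explicitly. The only cosmetic difference is a convention on whether Bob's basis is encoded by $U$ or $U^{\dagger}$, which is immaterial since non-displaceability applies to any unitary.
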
 
\begin{corollary}[Recovering] In the presence
of the unitary evolution given by an arbitrary unitary operator $U\left(t\right)$,
so that the considered quantum state at any time moment $T$ is equal
to $U(T)|\psi_{{\rm coh}}^{\boldsymbol{\phi}}\rangle$, Alice can
always tune the phases in a way that she can recover the maximally
coherent state, i.e. $U(T)|\psi_{{\rm coh}}^{\boldsymbol{\phi}}\rangle$
is maximally coherent.
\end{corollary}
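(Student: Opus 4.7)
The plan is to reduce this corollary directly to the mutually coherent state result already invoked in Section II. First I would fix the time instant $T$ and set $U:=U(T)\in\mathcal{U}(N)$. Demanding that the evolved state $U(T)|\psi_{\rm coh}^{\boldsymbol{\phi}}\rangle$ be maximally coherent with respect to Alice's computational basis is precisely the requirement $|\langle i|U|\psi_{\rm coh}^{\boldsymbol{\phi}}\rangle|^{2}=1/N$ for every $i$; combined with the built-in property $|\langle i|\psi_{\rm coh}^{\boldsymbol{\phi}}\rangle|^{2}=1/N$, this is exactly the defining condition of a state mutually coherent with respect to the two orthonormal bases determined by $\1$ and by $U$.

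The second step is to invoke the existence of such a state, which was discussed earlier and ultimately relies on the non-displacability of the great torus $T^{N-1}\subset\mathbb{C}P^{N-1}$ under the action of $\mathcal{U}(N)$ \cite{Cho04,Tam08,KJR14,Wolf}. Since $T^{N-1}$ and $U\cdot T^{N-1}$ are forced to intersect, any point in the intersection supplies a tuple $(\phi_{2},\ldots,\phi_{N})$, with $\phi_{1}\equiv 0$, that Alice simply dials into her initial state.

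The only subtlety worth flagging is the time dependence: the argument is carried out at each fixed $T$ separately, so the resulting phases generically depend on $T$, $\boldsymbol{\phi}=\boldsymbol{\phi}(T)$, tracking the motion of the intersection locus inside $\mathbb{C}P^{N-1}$ as $U(T)$ evolves. I do not anticipate any genuine obstacle here: the topological heavy lifting is the non-displacability statement already in hand from Section II, and the remainder is merely a direct identification of what \emph{recovering maximal coherence at time} $T$ means in terms of the basis pair $(\1,U(T))$. A one-line argument, with essentially all the work delegated to the previously established intersection theorem, should suffice.
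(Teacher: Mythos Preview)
Your proposal is correct and matches the paper's approach: the paper presents this corollary as an \emph{immediate} consequence of the existence of mutually coherent states (equivalently, of $B_{\max}=\log N$), without supplying any additional argument. Your reduction---fix $T$, identify the requirement as mutual coherence with respect to the pair $(\1,U(T))$, and invoke the torus non-displacability result---is exactly the intended one-line justification.
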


A physically relevant question related to the potential usefulness
of the above corollaries concerns imperfections, i.e. the case when
the angles $\phi_{2},\ldots,\phi_{N}$ are not ideally tuned. What
is then the coherence of the second (in Bob's basis or after the evolution)
quantum state? In order to answer that question we need to quantify
the coherence. To this end we resort to the $l_{1}$-norm of coherence 

\begin{equation}
C_{l_{1}}(\rho)=\sum_{i\neq j}|\rho_{ij}|=\left(\sum_{i}|\scalar{i}{\psi}|\right)^{2}-1,\label{l1Norm}
\end{equation}
which is a proper measure \cite{BCP14} of the discussed resource.
The second equality in (\ref{l1Norm}) is valid only in the case when
$\rho$ is pure. For the maximally coherent state the measure attains
its maximum equal to $N-1$. 

We start with the following simple lemma. 
\begin{lemma}
\label{lemma:coherence-bound}
Let $|\psi_{{\rm coh}}^{\boldsymbol{\phi}}\rangle$ be a maximally
coherent state 
and let $|\xi\rangle$ be
any pure state such that 
\begin{equation}
|\scalar{\psi_{{\rm coh}}^{\boldsymbol{\phi}}}{\xi}|^{2}=1-\varepsilon,\label{Error1}
\end{equation}
with some error $\varepsilon$. Then 
\begin{equation}
C_{l_{1}}(\left|\xi\right\rangle )\geq N-1-\varepsilon N.\label{bound1}
\end{equation}
\end{lemma}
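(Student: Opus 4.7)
The plan is to bound the $l_1$-coherence of $|\xi\rangle$ from below by exploiting the fact that $|\psi_{{\rm coh}}^{\boldsymbol{\phi}}\rangle$ has amplitudes of equal modulus $1/\sqrt{N}$ in the reference basis. The only nontrivial ingredient is the triangle inequality for the overlap $\langle \psi_{{\rm coh}}^{\boldsymbol{\phi}}|\xi\rangle$.

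First I would write out the inner product explicitly. Denoting $a_i := |\langle i|\xi\rangle|$ and using $|\psi_{{\rm coh}}^{\boldsymbol{\phi}}\rangle = \frac{1}{\sqrt{N}}\sum_j e^{i\phi_j}|j\rangle$, the triangle inequality gives
\begin{equation}
|\langle\psi_{{\rm coh}}^{\boldsymbol{\phi}}|\xi\rangle| = \left|\frac{1}{\sqrt{N}}\sum_{i=1}^{N} e^{-i\phi_i}\langle i|\xi\rangle\right| \leq \frac{1}{\sqrt{N}}\sum_{i=1}^{N} a_i.
\end{equation}
Squaring and invoking the hypothesis (\ref{Error1}) then yields $\bigl(\sum_i a_i\bigr)^2 \geq N(1-\varepsilon)$.

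Next I would use the second form of the $l_1$-coherence in (\ref{l1Norm}), which for pure $|\xi\rangle$ reads $C_{l_1}(|\xi\rangle) = \bigl(\sum_i a_i\bigr)^2 - 1$. Substituting the bound from the previous step immediately gives
\begin{equation}
C_{l_1}(|\xi\rangle) \geq N(1-\varepsilon) - 1 = N - 1 - \varepsilon N,
\end{equation}
which is the claim (\ref{bound1}).

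There is essentially no hard step here; the lemma is a one-line consequence of the triangle inequality once the two ingredients, the explicit form of $|\psi_{{\rm coh}}^{\boldsymbol{\phi}}\rangle$ and the alternative expression for $C_{l_1}$ on pure states, are combined. The only subtlety worth noting is that the bound is tight and becomes informative precisely in the regime $\varepsilon < (N-1)/N$, beyond which it degenerates to the trivial statement $C_{l_1} \geq 0$; this is consistent with the interpretation that small preparation errors $\varepsilon$ cost at most $\varepsilon N$ units of $l_1$-coherence relative to the maximum $N-1$.
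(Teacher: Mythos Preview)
Your proof is correct and follows essentially the same route as the paper: apply the triangle inequality to $|\langle\psi_{{\rm coh}}^{\boldsymbol{\phi}}|\xi\rangle|$ to obtain $\sqrt{1-\varepsilon}\leq \frac{1}{\sqrt{N}}\sum_j |\xi_j|$, then square and insert into the pure-state expression $C_{l_1}(|\xi\rangle)=\bigl(\sum_j|\xi_j|\bigr)^2-1$. The paper's argument is identical in substance, merely more terse.
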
 To prove that lemma we only notice that 
\begin{equation}
\sqrt{1-\varepsilon}=|\scalar{\psi_{{\rm coh}}^{\boldsymbol{\phi}}}{\xi}|=\frac{1}{\sqrt{N}}|\xi_{1}+\sum_{j=2}^{N}e^{-i\phi_{j}}\xi_{j}|\leq\frac{1}{\sqrt{N}}\sum_{j}|\xi_{j}|,\label{Derivation1}
\end{equation}
where $\xi_{j}=\scalar{j}{\xi}$, and rearrange the resulting
inequality using (\ref{l1Norm}).

We know that the angles in question can be tuned in a way, that for
any unitary $U$, the state 
\begin{equation}
|\tilde{\psi}_{{\rm coh}}^{\boldsymbol{\phi}}\rangle=U|\psi_{{\rm coh}}^{\boldsymbol{\phi}}\rangle,
\end{equation}
is maximally coherent. Moreover, if we prepare the state $|\psi_{{\rm coh}}^{\boldsymbol{\phi}}\rangle$
imperfectly, so that instead of it we have in our disposal a state
$|\xi\rangle$ satisfying Eq. (\ref{Error1}), then also
\begin{equation}
|\langle\tilde{\psi}_{{\rm coh}}^{\boldsymbol{\phi}}|U|\xi\rangle|^{2}=1-\varepsilon.
\end{equation}
The bound (\ref{bound1}), which is linear in the error $\varepsilon$,
thus immediately applies to the transformed state $U|\xi\rangle$.
We observe, that a general preparation-imperfection described by $\varepsilon$
linearly decreases the coherence of the quantum state. In the next
part, we show that whenever the imperfections are provided only by
the phase mismatch ($\phi_{1}=0$):
\begin{equation}
|\xi\rangle=|\psi_{{\rm coh}}^{\boldsymbol{\phi}+\boldsymbol{\chi}}\rangle,
\end{equation}
with $\left|\chi_{j}\right|\leq\varepsilon$, then the coherence decreases
by a term quadratic in $\varepsilon$. We have the chain of relations

\begin{equation}
|\scalar{\psi_{{\rm coh}}^{\boldsymbol{\phi}}}{\xi}|^{2}=\frac{1}{N^{2}}|\sum_{j}e^{i\chi_{j}}|^{2}\geq\frac{1}{N^{2}}|\sum_{j}e^{\pm i\varepsilon}|^{2},
\end{equation}
which directly lead to: 
\begin{equation}
\begin{split}|\scalar{\psi_{{\rm coh}}^{\boldsymbol{\phi}}}{\xi}|^{2} & \geq1-\sin^{2}(\varepsilon)\text{ \ \ \ \ \ \ \ for even }N\\
|\scalar{\psi_{{\rm coh}}^{\boldsymbol{\phi}}}{\xi}|^{2} & \geq1-c_{N}\sin^{2}(\varepsilon)\text{ \ \ \ for odd }N
\end{split}
,
\end{equation}
with $c_{N}=1-1/N^{2}$. These bounds, together with Lemma~\ref{lemma:coherence-bound}
show that the difference between $N-1$ and $C_{l_{1}}(U\ket{\xi})$
is proportional to $\varepsilon^{2}$.

\section{mutual coherence for several measurements\label{sec:mutual-coherence}}

\subsection{Uncertainty and certainty relations}

In a more general setup, one studies $L$ orthogonal measurements determined
by a collection of $L$ unitary matrices $\{U_{1}\equiv\1,U_{2},\dots,U_{L}\}$.
The mutual coherence together with related concepts while described
in terms of the information entropies is captured by the uncertainty and certainty
relations
\begin{equation}
0\le B_{{\rm min}}\le \frac1L \sum_{j=1}^{L}S_{j}\le B_{{\rm max}}\le \log N,\label{sumx}
\end{equation}
where as before $S_{j}$ is the Shannon entropy of the probability
distribution $p_{i}^{(j)}=|\langle i|U_{j}|\psi\rangle|^{2}$.
In general, it is not an easy task to provide non-trivial bounds $B_{{\rm min}}$,
and $B_{{\rm max}}$ valid for a broad class of measurements.
Several lower bounds, leading to uncertainty relations, were recently studied in the literature
\cite{PRZ13,FGG13,RPZ14,LMF15}, 
but we present below an alternative lower bound.
Furthermore, we derive in Section IV B a universal upper bound
which leads to a {\sl certainty} relation.
Note that the information acquired in the set of measurements can also be characterized by the average entropy
of R{\'e}nyi or Tsallis \cite{Rastegin13D}, 
which reduce to the Shannon entropy in particular cases.

For a given set of bases defining orthogonal measurements,
a natural question appears, whether the average entropy can
achieve the maximal value $\log N$. This is the case if there exists 
a mutually coherent state, 
i.e. the state $|\psi_{{\rm coh}}^{\boldsymbol{\phi}}\rangle$ such that
\begin{equation}
\frac{1}{L} 
\sum_{j=1}^{L}C_{l_{1}}(U_{j}\ket{\psi_{{\rm coh}}^{\boldsymbol{\phi}}})=
N-1
\end{equation}
is maximal. In order to answer the above question, we use the decomposition
of unitary matrices \cite{vos,Wolf}, being a corollary of the fact that
$|\psi_{{\rm coh}}^{\boldsymbol{\phi}}\rangle$ exists for $L=2$,
\begin{equation}
U_{j}=D(\boldsymbol{\omega}^{(j)})\mathcal{F}_{N}\left(1\oplus Y_{j}\right)\mathcal{F}_{N}^{\dagger}D(-\boldsymbol{\phi}^{(j)}).\label{decomposition}
\end{equation}
This parameterization involves the phase gate
\begin{equation}
D(\boldsymbol{\alpha}^{(j)})=\textrm{diag}\left(e^{i\alpha_{1}^{(j)}},e^{i\alpha_{2}^{(j)}},\ldots,e^{i\alpha_{N}^{(j)}}\right),
\end{equation}
and the Fourier matrix $\left(\mathcal{F}_{N}\right)_{kl}=e^{2i\pi\left(k-1\right)\left(l-1\right)/N}/\sqrt{N}$.
The matrices $Y_{j}$ represent arbitrary $N-1$-dimensional unitary
operations acting on the subspace spanned by $|2\rangle,\ldots,|N\rangle$.

The phase gate $D(-\boldsymbol{\phi}^{(j)})$ acting on the state
$|\psi_{{\rm coh}}^{\boldsymbol{\phi}^{(j)}}\rangle$ produces the
dephased maximally coherent state of the form $|\psi_{{\rm coh}}^{\boldsymbol{0}}\rangle=\sum_{j=1}^{N}|j\rangle/\sqrt{N}$.
Further application of the Fourier gate $\mathcal{F}_{N}^{\dagger}$
transforms the state $|\psi_{{\rm coh}}^{\boldsymbol{0}}\rangle$
into $|1\rangle$. The latter state remains unchanged if one applies
$1\oplus Y_{j}$ with an arbitrary $Y_{j}$. In the final steps, the
inverse Fourier transform together with the second phase gate in (\ref{decomposition})
leads to the final, maximally coherent state $|\psi_{{\rm coh}}^{\boldsymbol{\omega}^{(j)}}\rangle$.
We are now in position to answer the major question of this section.
\begin{corollary} 
The unitary matrices $\1,U_{2},\dots,U_{L}$ given
by the decomposition (\ref{decomposition}), such that at least one
$Y_{j}\neq\1$, possess a mutually coherent state if the phase gates
$D(-\boldsymbol{\phi}^{(j)})$ are the same for all matrices in question,
i.e. do not depend on the index $j=2,\ldots,L$. 
\end{corollary}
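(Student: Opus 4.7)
The plan is to exhibit a mutually coherent state directly, by applying the decomposition (\ref{decomposition}) factor by factor to a single natural candidate. Write $\boldsymbol{\phi}$ for the common value of $\boldsymbol{\phi}^{(j)}$, $j=2,\ldots,L$, and take
\begin{equation}
|\psi\rangle \;:=\; |\psi_{\rm coh}^{\boldsymbol{\phi}}\rangle \;=\; \frac{1}{\sqrt{N}}\sum_{k=1}^{N} e^{i\phi_{k}}|k\rangle .
\end{equation}
The claim to verify is that for every $j=1,\ldots,L$ the state $U_{j}|\psi\rangle$ is maximally coherent in the computational basis, i.e.\ $|\langle i|U_{j}|\psi\rangle|^{2}=1/N$ for all $i$.

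For $j=1$ this is immediate, since $U_{1}=\1$ and $|\psi\rangle$ is maximally coherent by construction. For $j\ge 2$ I would simply reuse the five-step chain already laid out in the paragraph preceding the corollary: first $D(-\boldsymbol{\phi})|\psi_{\rm coh}^{\boldsymbol{\phi}}\rangle=|\psi_{\rm coh}^{\boldsymbol{0}}\rangle$ (this is exactly the point at which the common-phase hypothesis is used and the candidate $|\psi\rangle$ ceases to depend on $j$); then $\mathcal{F}_{N}^{\dagger}|\psi_{\rm coh}^{\boldsymbol{0}}\rangle=|1\rangle$ by the standard Fourier identity; then $(1\oplus Y_{j})|1\rangle=|1\rangle$ for any $Y_{j}$, which is precisely why the block-diagonal factor is irrelevant; then $\mathcal{F}_{N}|1\rangle=|\psi_{\rm coh}^{\boldsymbol{0}}\rangle$; and finally $D(\boldsymbol{\omega}^{(j)})|\psi_{\rm coh}^{\boldsymbol{0}}\rangle=|\psi_{\rm coh}^{\boldsymbol{\omega}^{(j)}}\rangle$, which has uniform moduli. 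Concatenating these identities gives $U_{j}|\psi\rangle=|\psi_{\rm coh}^{\boldsymbol{\omega}^{(j)}}\rangle$, completing the verification.

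There is honestly no serious obstacle here: once the decomposition (\ref{decomposition}) is granted, the whole content of the corollary is the observation that equality of the outer phase gates $D(-\boldsymbol{\phi}^{(j)})$ is exactly what allows a single state, namely $|\psi_{\rm coh}^{\boldsymbol{\phi}}\rangle$, to be simultaneously annihilated (in the phase-cancelling sense) by all $D(-\boldsymbol{\phi}^{(j)})$, after which the block-diagonal factors $1\oplus Y_{j}$ are automatically harmless. The side hypothesis that at least one $Y_{j}\neq\1$ is not used in the construction; it only rules out the degenerate regime in which every $U_{j}$ is diagonal in the computational basis and the statement becomes vacuous. A natural follow-up, which I would flag but not attempt inside this proof, is to ask whether the common-phase condition is also \emph{necessary} for the existence of a mutually coherent state; that converse question is not part of the corollary and would have to be addressed separately.
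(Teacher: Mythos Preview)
Your argument is correct and is essentially the paper's own: the paper states that the corollary ``is an immediate consequence of the involved decomposition,'' and the five-step chain you invoke is exactly the computation spelled out in the paragraph preceding the corollary. Your additional remarks (that the hypothesis $Y_j\neq\1$ plays no role in the construction, and that necessity of the common-phase condition is a separate question) are apt and consistent with the paper's own comment that the corollary ``does not exclude other possibilities.''
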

The above corollary is an immediate consequence of the involved decomposition.
It does not exclude other possibilities with special, $\boldsymbol{\phi}^{(j)}$-dependent, internal unitaries $Y_{j}$ allowing for different right
phase gates, but the situation described by Corollary 3 seems to be generic.

We note in passing that the concept of mutual coherence is related to unextensibility of mutually unbiased bases \cite{Grassl}. The set of MUBs is called extensible, if there exists an additional basis formed by the states being mutually coherent with respect to the MUBs in question \cite{BCWeh}. Thus, if the analysed  bases possess no mutually coherent state they are unextensible.

\subsection{Generally valid bounds}

Our major aim is to derive the bounds $B_{\textrm{min}}$
and $B_{\textrm{max}}$ relevant for a general setting (arbitrary
$L$ and $N$) described in terms of a collection of unitaries $U_{1},\ldots,U_{L}$.
To achieve that goal, we need to briefly introduce the Bloch representation
of a quantum state. Denote by $\sigma_{i}$, $i=1,\ldots,N^{2}-1$
the traceless and Hermitian generators of the group $\mathcal{SU}(N)$
fulfilling $\textrm{Tr}\sigma_{i}\sigma_{i'}=2\delta_{i'i}$, which are given by Pauli matrices for $N=2$. Any
quantum state can be spanned by a basis formed by the identity $\1_{N}$
and the matrices $\{\sigma_{i}\}$. In particular, the density matrix
of the state $\left|\psi\right\rangle $ can be written as:
\begin{equation}
\left|\psi\right\rangle \left\langle \psi\right|=\frac{1}{N}\left(\1_{N}+\sqrt{\frac{N\left(N-1\right)}{2}}\sum_{i=1}^{N^{2}-1}x_{i}\sigma_{i}\right),
\end{equation}
The Bloch vector $\boldsymbol{x}$
is constrained by $\boldsymbol{x}\cdot\boldsymbol{x}=1$ and \cite{Bloch1,Bloch2}
\begin{equation}
2\left(N-2\right)\boldsymbol{x}=\sqrt{N\left(N-1\right)/2}\,\textrm{Tr}\left((\boldsymbol{x}\cdot\boldsymbol{\sigma})\boldsymbol{\sigma}\right).\label{constraint}
\end{equation}

Let us now rescale the original probabilities $p_{i}^{(k)}$ to
be 
\begin{equation}
\tilde{p}_{i,k}=L^{-1}p_{i}^{(k)}\equiv\frac{1}{L}|\langle i|U_{k}|\psi\rangle|^{2},
\end{equation}
so that $\tilde{p}_{i,k}$ sum up (with respect to both $1\leq i\leq N$
and $1\leq k\leq L$) to $1$. In other words, we treat $L$ orthogonal measurements as a single POVM involving $N\cdot L$ Kraus operators.

Define the 'purity' coefficient

\begin{equation}
\mathcal{P}=\sum_{k=1}^{L}\sum_{i=1}^{N}\tilde{p}_{i,k}^{2},\qquad\frac{1}{LN}\leq\mathcal{P}\leq\frac{1}{L}.
\end{equation}
We shall now prove a statement crucial in the derivation of the general
bounds: \begin{theorem}\label{THP} The coefficient $\mathcal{P}$
is bounded 
\begin{equation}
\mathcal{P}_{\min}\leq\mathcal{P}\leq\mathcal{P}_{\max},
\end{equation}
by
\begin{equation}
\mathcal{P}_{\min/\max}=\frac{1}{LN}+\left(\frac{N-1}{2NL^2}\right)\mathcal{M}_{\min/\max},
\end{equation}
where $\mathcal{M}_{\min}$ and $\mathcal{M}_{\max}$ respectively
denote the minimal and the maximal eigenvalues of the matrix 
\begin{equation}
M_{j'j}=\sum_{k=1}^{L}\sum_{i=1}^{N}\textrm{Tr}\left(U_{k}^{\dagger}|i\rangle\langle i|U_{k}\sigma_{j'}\right)\textrm{Tr}\left(U_{k}^{\dagger}|i\rangle\langle i|U_{k}\sigma_{j}\right).
\end{equation}
\end{theorem}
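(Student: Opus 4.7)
The natural approach is to expand $\mathcal{P}$ directly in the Bloch parameterization introduced in the paragraph preceding the theorem, so that the dependence on the state $|\psi\rangle$ is condensed into the unit vector $\boldsymbol{x}\in\mathbb{R}^{N^2-1}$, and then invoke Rayleigh--Ritz.

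First I would rewrite the single probabilities as $p_i^{(k)}=\tr\!\left(U_k^\dagger|i\rangle\langle i|U_k\,|\psi\rangle\langle\psi|\right)$ and substitute the Bloch form
\[
|\psi\rangle\langle\psi|=\frac{1}{N}\1_N+\frac{1}{N}\sqrt{\tfrac{N(N-1)}{2}}\sum_{j}x_{j}\sigma_{j}.
\]
Since $\tr(U_k^\dagger|i\rangle\langle i|U_k)=1$, one obtains
\[
\tilde{p}_{i,k}=\frac{1}{LN}+\frac{1}{LN}\sqrt{\tfrac{N(N-1)}{2}}\sum_{j}x_{j}\,v_{i,k,j},
\]
with $v_{i,k,j}=\tr\!\left(U_k^\dagger|i\rangle\langle i|U_k\,\sigma_{j}\right)$. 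Squaring this affine expression in $\boldsymbol{x}$ and summing over $i,k$ produces three contributions.

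The constant piece contributes $\sum_{i,k}(LN)^{-2}=1/(LN)$, matching the leading term of $\mathcal{P}_{\min/\max}$. The linear piece contains the coefficient $\sum_i v_{i,k,j}=\tr\!\left(\sum_i U_k^\dagger|i\rangle\langle i|U_k\,\sigma_{j}\right)=\tr(\sigma_{j})$, which vanishes for each generator, so the cross term drops out identically. The quadratic piece collapses to
\[
\frac{N-1}{2NL^{2}}\sum_{j,j'}x_{j}x_{j'}\sum_{i,k}v_{i,k,j}v_{i,k,j'}=\frac{N-1}{2NL^{2}}\,\boldsymbol{x}^{T}M\boldsymbol{x},
\]
with $M$ exactly the matrix defined in the statement. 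The matrix $M$ is real, symmetric, and positive semidefinite (being a Gram matrix of the vectors $(v_{i,k,\cdot})_{i,k}$), so it admits an orthonormal eigenbasis with real eigenvalues.

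Finally, since the Bloch vector of any pure state satisfies $\boldsymbol{x}\cdot\boldsymbol{x}=1$, the Rayleigh--Ritz principle gives $\mathcal{M}_{\min}\le\boldsymbol{x}^{T}M\boldsymbol{x}\le\mathcal{M}_{\max}$, which substituted above delivers the claimed $\mathcal{P}_{\min}\le\mathcal{P}\le\mathcal{P}_{\max}$. The only conceptual subtlety I anticipate is that for $N\ge 3$ not every unit vector $\boldsymbol{x}$ is the Bloch vector of a pure state: the extra cubic constraint~(\ref{constraint}) further restricts the admissible set. Consequently the bounds derived this way are guaranteed to be valid but need not be saturated; using only $|\boldsymbol{x}|=1$ is a deliberate relaxation that yields a closed-form eigenvalue bound, which is precisely what the theorem claims. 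The trivial limits $1/(LN)\le\mathcal{P}\le 1/L$ quoted just above the theorem then follow by noting that $M$ is nonnegative and bounded from above by the Hilbert--Schmidt norm of the projectors, keeping the whole argument self-contained.
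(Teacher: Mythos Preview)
Your argument is correct and coincides with the paper's own proof: both expand $\mathcal{P}$ via the Bloch representation, use the tracelessness of the $\sigma_j$ to kill the linear term, identify the quadratic form $\boldsymbol{x}^{T}M\boldsymbol{x}$, and then relax the cubic pure-state constraint~(\ref{constraint}) to the sphere $|\boldsymbol{x}|=1$ so that Rayleigh--Ritz yields the eigenvalue bounds. Your additional remarks---that $M$ is a Gram matrix and hence positive semidefinite, and the recovery of the trivial limits on $\mathcal{P}$---are correct elaborations not spelled out in the paper but fully consistent with its approach.
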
 We start the proof with the chains of inequalities
defining $\mathcal{P}_{\min}$ and $\mathcal{P}_{\max}$:
\begin{equation}
\mathcal{P}\leq\max_{\left|\psi\right\rangle }\mathcal{P}\leq\max_{\boldsymbol{x}\cdot\boldsymbol{x}=1}\mathcal{P}=\mathcal{P}_{\max},
\end{equation}
\begin{equation}
\mathcal{P}\geq\min_{\left|\psi\right\rangle }\mathcal{P}\geq\min_{\boldsymbol{x}\cdot\boldsymbol{x}=1}\mathcal{P}=\mathcal{P}_{\min}.
\end{equation}
In other words, optimization with respect to $\left|\psi\right\rangle $
is equivalent to optimization made for the Bloch vector $\boldsymbol{x}$
constrained by $\boldsymbol{x}\cdot\boldsymbol{x}=1$ and (\ref{constraint}).
We obtain the desired bounds by skipping the second constraint. Due
to the fact that all matrices $\sigma_{i}$ are traceless, we have the identity
\begin{equation}
\sum_{i=1}^{N}\textrm{Tr}\left(U_{k}^{\dagger}|i\rangle\langle i|U_{k}\sigma_{j}\right)=0,
\end{equation}
leading to the dependence of $\mathcal{P}$ on $\boldsymbol{x}$, of
the form:
\begin{equation}
\mathcal{P}(\boldsymbol{x})=\frac{1}{LN}+\frac{N-1}{2NL^2}\sum_{j',j=1}^{N^{2}-1}M_{j'j}x_{j}x_{j'}.
\end{equation}
The last step of the proof is the direct optimization with respect
to $\boldsymbol{x}$, which since the matrix $M$ is hermitian, picks
up its relevant eigenvalues. 

We are now in position to present the major result of this Section,
which leads to \emph{purity-optimized} entropic uncertainty and certainty relations.

\begin{theorem} The valid bounds $B_{\min}$ and $B_{\max}$ are
of the form: 
\begin{equation}
\label{bmin}
B_{\min}=L\mathcal{P}_{\max}\left[a\left(K+1\right)\log\left(K+1\right)+\left(1-a\right)K\log K\right],
\end{equation}
where $K=\left\lfloor (L\mathcal{P}_{\max})^{-1}\right\rfloor $,
$a=(L\mathcal{P}_{\max})^{-1}-K$ and $\left\lfloor \cdot\right\rfloor $
denotes the floor function, and
\begin{equation}
\label{bmax}
B_{\max}=S(Q)-\log L,
\end{equation}
with $S(Q)$ being the Shannon entropy of the probability vector 
\begin{equation}
Q=\frac{1}{LN}\{1+(LN-1)\sqrt{r},\underbrace{1-\sqrt{r},\ldots,1-\sqrt{r}}_{LN-1}\},
\end{equation}
given by $r=\left(LN\mathcal{P}_{\min}-1\right)/\left(LN-1\right)$.

\end{theorem}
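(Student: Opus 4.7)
Both bounds are extremal problems for Shannon entropy under the $\ell_2^2$-constraint supplied by Theorem~\ref{THP}. Rescaling $\tilde p_{i,k}:=p_i^{(k)}/L$ turns the $L$ measurements into a single probability vector on $d=LN$ outcomes and furnishes the two useful identities
\begin{equation*}
\bar S := \frac1L\sum_{k=1}^L S_k = S(\tilde p) - \log L = \frac1L\sum_k H(p^{(k)}),\qquad \mathcal{P} = \frac{1}{L^2}\sum_k\|p^{(k)}\|_2^2,
\end{equation*}
so that Theorem~\ref{THP} gives $\|\tilde p\|_2^2 \in [\mathcal{P}_{\min},\mathcal{P}_{\max}]$. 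The two bounds then arise from two different reductions of the same entropy-versus-purity trade-off.

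\textbf{Upper bound.} I would work with the joint distribution $\tilde p$ on $LN$ outcomes and relax the marginal constraint $\sum_i\tilde p_{i,k}=1/L$; enlarging the feasible set only loosens the bound, leaving it valid. The task becomes $\max\{S(\tilde p) : \|\tilde p\|_2^2\ge\mathcal{P}_{\min}\}$. A Lagrange multiplier calculation gives the stationarity condition $\log\tilde p_j + 2\mu\tilde p_j=\mathrm{const}$; since $f(x)=\log x+2\mu x$ has at most two positive roots, extremizers take at most two distinct values. Comparing the ``one peak plus $d-1$ equal small'' and ``one dip plus $d-1$ equal large'' configurations at the same purity shows the former always wins on entropy. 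Solving $\sum\tilde p=1$, $\sum\tilde p^2=\mathcal{P}_{\min}$ in this ansatz with peak $(1+(d-1)\sqrt r)/d$ and bulk $(1-\sqrt r)/d$ forces $r=(LN\mathcal{P}_{\min}-1)/(LN-1)$, i.e., precisely the $Q$ in the statement. Since weakening the lower bound only enlarges the feasible set, the optimum is non-increasing in the bound, so $S(\tilde p)\le S(Q)$ and $\bar S\le S(Q)-\log L=B_{\max}$.

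\textbf{Lower bound.} Here I would work per basis. Define $\phi(Q):=\inf\{H(p): \|p\|_2^2\le Q\}$ over distributions on $N$ outcomes, so $H(p^{(k)})\ge\phi(\|p^{(k)}\|_2^2)$. At the anchor values $Q=1/K$, $K\in\{1,\dots,N\}$, the infimum is attained by the uniform distribution on $K$ outcomes and equals $\log K$; on each segment $[1/(K+1),1/K]$ the minimizer is the two-level ``$K$ equal large plus one small'' distribution, and the resulting parametric curve is concave in $Q$. Let $\tilde\phi$ denote the piecewise-linear interpolant through the anchors $(1/K,\log K)$; per-segment concavity of $\phi$ gives $\tilde\phi\le\phi$, while the segment slopes $-K(K+1)\log((K+1)/K)$ increase as $Q$ grows, making $\tilde\phi$ globally convex and strictly decreasing. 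Chaining,
\begin{equation*}
\bar S \ge \frac1L\sum_k\tilde\phi(\|p^{(k)}\|_2^2) \ge \tilde\phi\!\left(\frac1L\sum_k\|p^{(k)}\|_2^2\right) = \tilde\phi(L\mathcal{P}) \ge \tilde\phi(L\mathcal{P}_{\max}),
\end{equation*}
using respectively $\tilde\phi\le\phi$, Jensen for convex $\tilde\phi$, and monotonicity of $\tilde\phi$ together with $L\mathcal{P}\le L\mathcal{P}_{\max}$. With $L\mathcal{P}_{\max}=1/(K+a)$, $K=\lfloor(L\mathcal{P}_{\max})^{-1}\rfloor$, $a=(L\mathcal{P}_{\max})^{-1}-K$, elementary algebra evaluates $\tilde\phi(L\mathcal{P}_{\max})$ to exactly the stated $B_{\min}$.

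\textbf{Main obstacle.} Both halves rest on the structure of entropy extremizers at fixed $\ell_2^2$ norm. The two statements I expect to require the most care are (i) that the ``one peak'' configuration strictly dominates every other two-level configuration for the maximization, and (ii) the concavity of $\phi$ on each segment $[1/(K+1),1/K]$, which is what validates the piecewise-linear convex minorant $\tilde\phi$. Both reduce to elementary but fiddly Lagrangian/Hessian computations on the explicit two-level parametric family.
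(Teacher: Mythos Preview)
Your proposal is correct and follows essentially the same route as the paper. The paper's proof simply cites Harremo\"es--Tops\o{}e (Theorem~II.5(i)) for the upper bound and Wu--Yu--M\o{}lmer (their Theorem~2, which in turn rests on Harremo\"es--Tops\o{}e's lower envelope) for the lower bound, whereas you have sketched the actual content of those results: the two-level structure of entropy extremizers at fixed index of coincidence, the ``one peak'' maximizer giving $Q$, and the piecewise-linear convex minorant $\tilde\phi$ combined with Jensen for the lower bound---this last step is precisely the mechanism behind Wu--Yu--M\o{}lmer's Theorem~2.
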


The lower bound $B_{\min}$ is a direct extension of Theorem 2
established for mutually unbiased bases
by  Wu, Yu and M\o{}lmer in \cite{Molmer}.
 As our Theorem \ref{THP} generalizes and extends Theorem 1 from \cite{Molmer},
$B_{\min}$ is given as in Theorem 2 therein with their $C$ being
set to $L\mathcal{P}_{\max}$. Note that the results of Wu et al. 
were based on the detailed analysis performed by Harremo\"es and Topsoe  \cite{IEEE2001}. To derive
$B_{\max}$ we can directly rely on  \cite{IEEE2001}, using their Theorem II.5
part i). This result provides an upper bound for the sum of the Shannon
entropies as a function of the coefficient $\mathcal{P}$. Since this
bound is a decreasing function of $\mathcal{P}$, it remains valid
when $\mathcal{P}$ becomes substituted by its lower bound, namely
$\mathcal{P}_{\min}$. Note that the bound $B_{\max}$ is the  genuinely first result of that kind, while the alternative lower bounds can also be obtained by averaging the pairwise bounds (for $L=2$) or by multiobservable majorization \cite{RPZ14}. There is no possibility to get the pairwise counterpart of  $B_{\max}$ as for $L=2$ one has $\mathcal{M}_\textrm{min}=0$, $r=0$  and consequently $S(Q)=\log N$.

In the following sections we shall study several numerical examples showing the behavior of the optimal bounds in comparison with the analytical results at hand, including the progress described above.

\subsection{Three measurements for qubits}

A great circle is nondisplacable in a sphere, 
so any two such circles will always intersect. However, 
three great circles belonging to a sphere will generically not 
cross in a single point.
Therefore one can expect that for three
orthogonal measurements of a qubit  in three bases 
the average entropy of the probability vectors representing the measurement outcomes
will be less than the maximal value.

\begin{figure}
\centering
\includegraphics[width=1\linewidth]{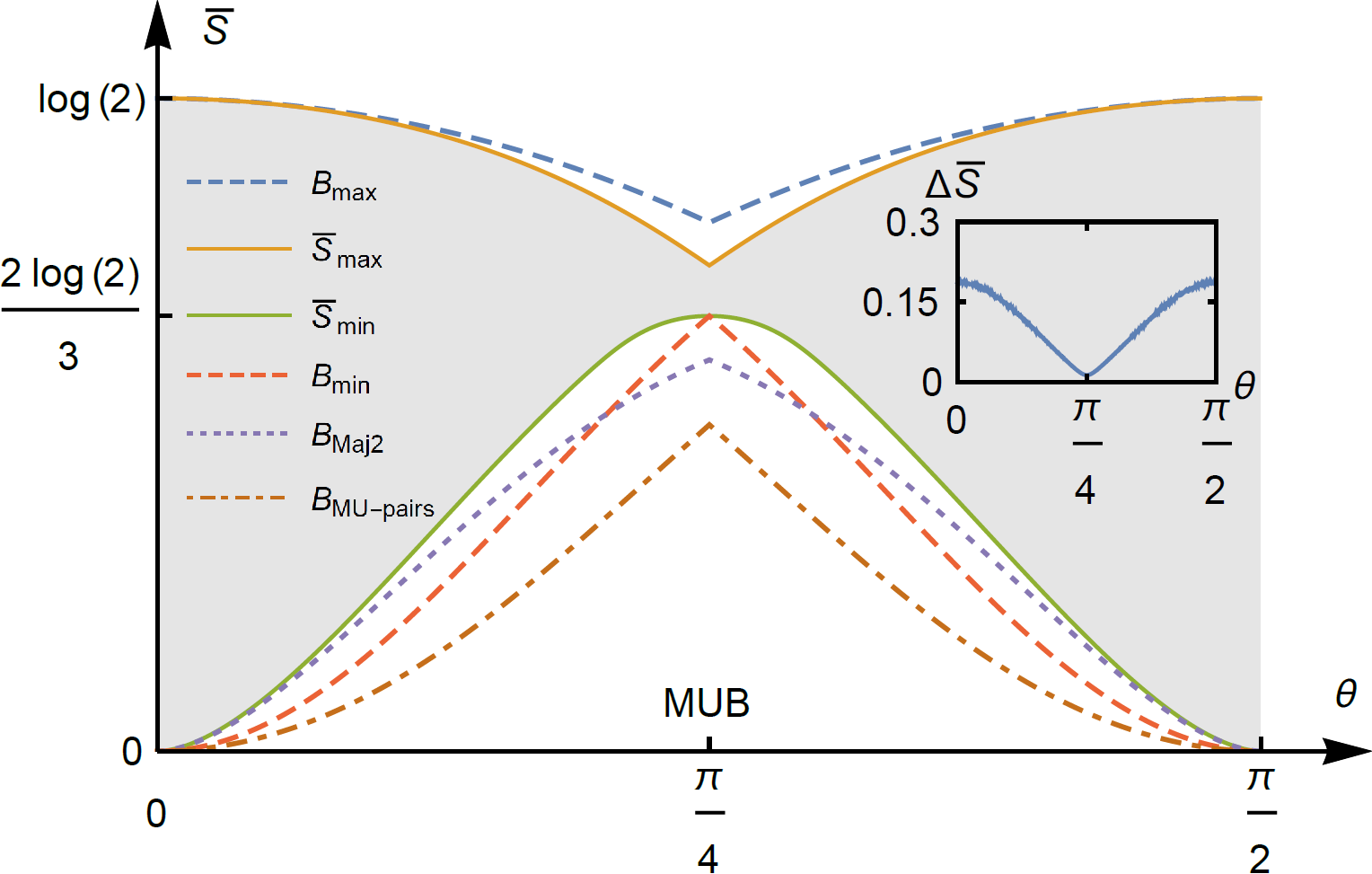}
\caption{As in Fig. \ref{fig:SN2L2}
 for $L=3$ measurements of a qubit.
Note the upper bound  $B_{\max}$ (\ref{bmax}) represented by the upper dashed curve,
which provides a nontrivial entropic certainty relation, and the lower bound $B_{\rm min}$ (27) which becomes tight at
$\theta=\pi/4$ for MUB.
}
\label{fig:SN2L3}
\end{figure}

To investigate this issue we analyze a one--parameter family of
three measurements, determined by three unitary matrices,
depending on an angle  $\theta$, 
\begin{equation} \label{eqn:3-mubs}
U_1={\1}_2, 
\ \
U_2=
\left(
\begin{smallmatrix}
\cos \theta & \sin \theta \\
\sin \theta & -\cos \theta
\end{smallmatrix}
\right)
,
\ \
U_3=
\left(
\begin{smallmatrix}
\cos \theta & \sin \theta \\
i \sin \theta & -i \cos \theta
\end{smallmatrix}
\right).
\end{equation}
Note that  for $\theta=0$ all three bases coincide, 
while in the case of $\theta = \pi /4$ they  become mutually unbiased.
Figure \ref{fig:SN2L3} shows the average entropy of measurement in these
three bases as a function of the angle $\theta$: 
the shaded area shows the allowed region bounded by solid lines, 
where dotted (or dashed-dotted) lines denote bounds obtained by using Maassen-Uffink 
relation \cite{MU88} and the majorization bound \cite{RPZ14}. Dashed lines correspond to the bounds
(\ref{bmin}) and (\ref{bmax}) provided by Theorem 2.
Note that the difference between 
the upper and the lower limits, computed numerically 
and represented by solid lines, is the smallest for
$\theta = \pi /4$, corresponding to MUBs. 
A similar property holds for the root mean square deviation
of the entropy, $\Delta {\bar S}$, presented in the inset.

In order to explore the generic case of arbitrary three orthogonal measurements of a qubit,
we work in first basis once more setting $U_{1}={\1}_2$
and draw remaining two matrices $U_{2}$ and $U_{3}$ according to the Haar measure
on the unitary  group $\mathcal{U}(2)$.
%
In Fig. \ref{fig22} we present the maximal and minimal values
of the average entropy $\bar S$ optimized over the set of all pure states
for collection of three randomly chosen bases. Variable $\xi$
at the horizontal axis characterizes the average deviation of the 
unitary transformation matrices from identity
and is normalized as $0\leq\xi\leq1$. It is defined by 
\begin{equation}
\xi^{2}=\frac{4}{3}\sum_{j=1}^{3}v_{j}(1-v_{j}),
\label{xixi}
\end{equation}
where the probabilities:
$v_{1}=|(U_2)_{11}|^2=\cos^{2}\theta_{1}$, 
   $v_{2}=|(U_3)_{11}|^2 = \cos^{2}\theta_{2}$ and
$
v_{3}= |(U_2U_3^{\dagger})_{11}|^2 =|\cos\theta_{1}\cos\theta_{2}+\sin\theta_{1}\sin\theta_{2}e^{i(\beta_{1}-\beta_{2})}|^{2},
$
are expressed as functions of phases entering the parameterization of unitary matrices
$U_2=[\cos\theta_1, \sin\theta_1; -\sin\theta_1, \cos\theta_1]$ 
and
$U_3=[e^{-i \beta_2}\cos\theta_2, -e^{-i \beta_1}\sin\theta_2; e^{i \beta_1}\sin\theta_2, e^{i \beta_2}\cos\theta_2]$. 

\begin{figure}
\begin{centering}
\includegraphics[scale=0.35]{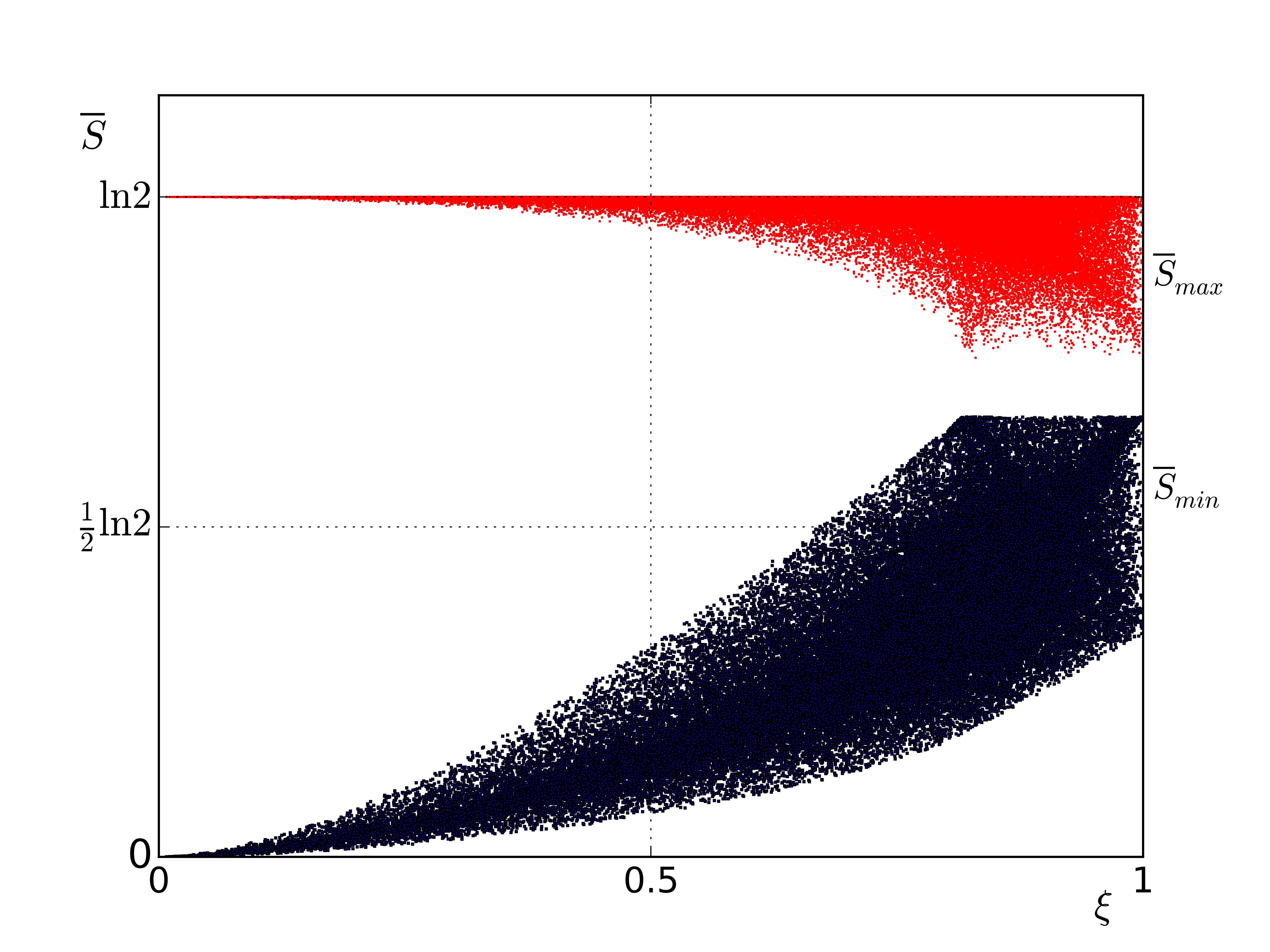}
\par\end{centering}
\protect\caption{Maximal and minimal value of the averaged
entropy $\bar S$ for a triple of random unitary gates of size $N=2$
as a function of the parameter $\xi$, 
which characterizes the average deviation of unitary matrices 
from identity, and equals to unity for MUB.}
\label{fig22}
\end{figure}

The value $\xi=0$ corresponds to the trivial case $\theta_{1}=0=\theta_{2}$,
for which $U_{2}=\1$ and $U_{3}$ is a phase gate. The opposite value
$\xi=1$ describes the case of MUB, for which $\theta_{1}=\pi/4=\theta_{2}$
and $\beta_{1}=\pi/4=\beta_{2}$. 
For $L=3$ MUBs of size $N=2$ the known bounds for the average entropy
\cite{Sa93} are tight:
 $B_{{\rm min}}^{SR}= \frac23 \log 2$ and 
\begin{equation}
B_{{\rm max}}^{SR} = \frac{1}{2} \log (6)- \frac{1}{2 \sqrt{3}} \log
   \left(2+\sqrt{3}\right) \approx 0.516, \label{upperMUB} 
\end{equation}
respectively. Note that $\log 2\approx 0.693$, giving in this case
the ultimate upper bound attained by mutually coherent states is 
larger than the optimal value (\ref{upperMUB}). In other words
the collection of three MUBs for $N=2$ does not share any mutually
coherent state. The bound $B_{\max}$, Eq. (\ref{bmax}), provides a reliable upper limitation, especially beyond the MUB case.

A closer look at  Fig. \ref{fig22} reveals  
that the trivial lower bound equal to $0$ is attained only if $\xi=0$,
as for other values of $\xi$ the non-trivial entropic uncertainty relations
apply. On the other hand, the maximal upper bound $\log2$, being the signature of
mutual coherence, is saturated for every allowed value of $\xi$. This
asymmetry is a major qualitative difference between mutual coherence
and $0$-entropy case, or in different terms, between certainty and
uncertainty relations. While the latter situation is typically forbidden
by quantum mechanics, the former case is rather common. When one comes
closer to mutual unbiasedness of the bases in question, it is however
more likely to find examples which possess no mutually coherent state
at all.

Observe that the difference between the upper and the lower limits
is the smallest for $\xi=1$, corresponding to MUB.
Numerical results show that a similar property holds for
the variance of the entropy.
We are not in position to prove this fact analytically for the Shannon
entropy. However, a related statement formulated in terms of the variance
of the Tsallis entropy of order two holds for any dimension $N$,
for which a complete set of $N+1$ MUBs exists - see Appendix~\ref{app_var}.

\subsubsection{Geometrical intuition on the Bloch sphere}
\begin{figure}
\begin{centering}
\includegraphics[scale=0.40]{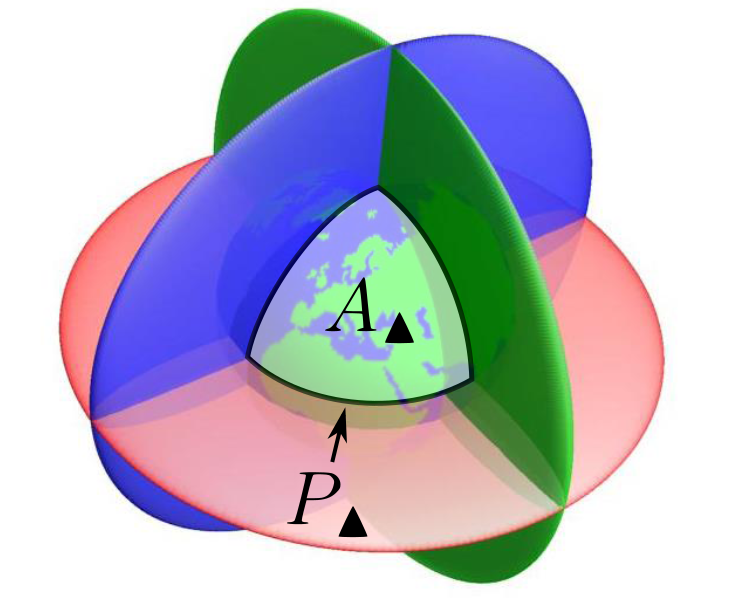}
\par\end{centering}
\protect\caption{Three equators on the sphere usually do not cross
    in a single point. The logo of the International Conference
   on Squeezed States and Uncertainty Relations (ICSSUR) held in Gda\'nsk in 2015,
  can be interpreted as a triple of MUBs for one qubit.
  Any such collection of orthogonal bases in ${\cal H}_2$
  can be associated with triangles on a sphere characterized by the minimal
 area $A_\blacktriangle$ or the minimal perimeter $P_\blacktriangle$.
  In the case of MUB shown here, all eight spherical triangles are of equal shape and size.}
\label{figICSSUR}
\end{figure}

\begin{figure}
\begin{centering}
\includegraphics[scale=0.22]{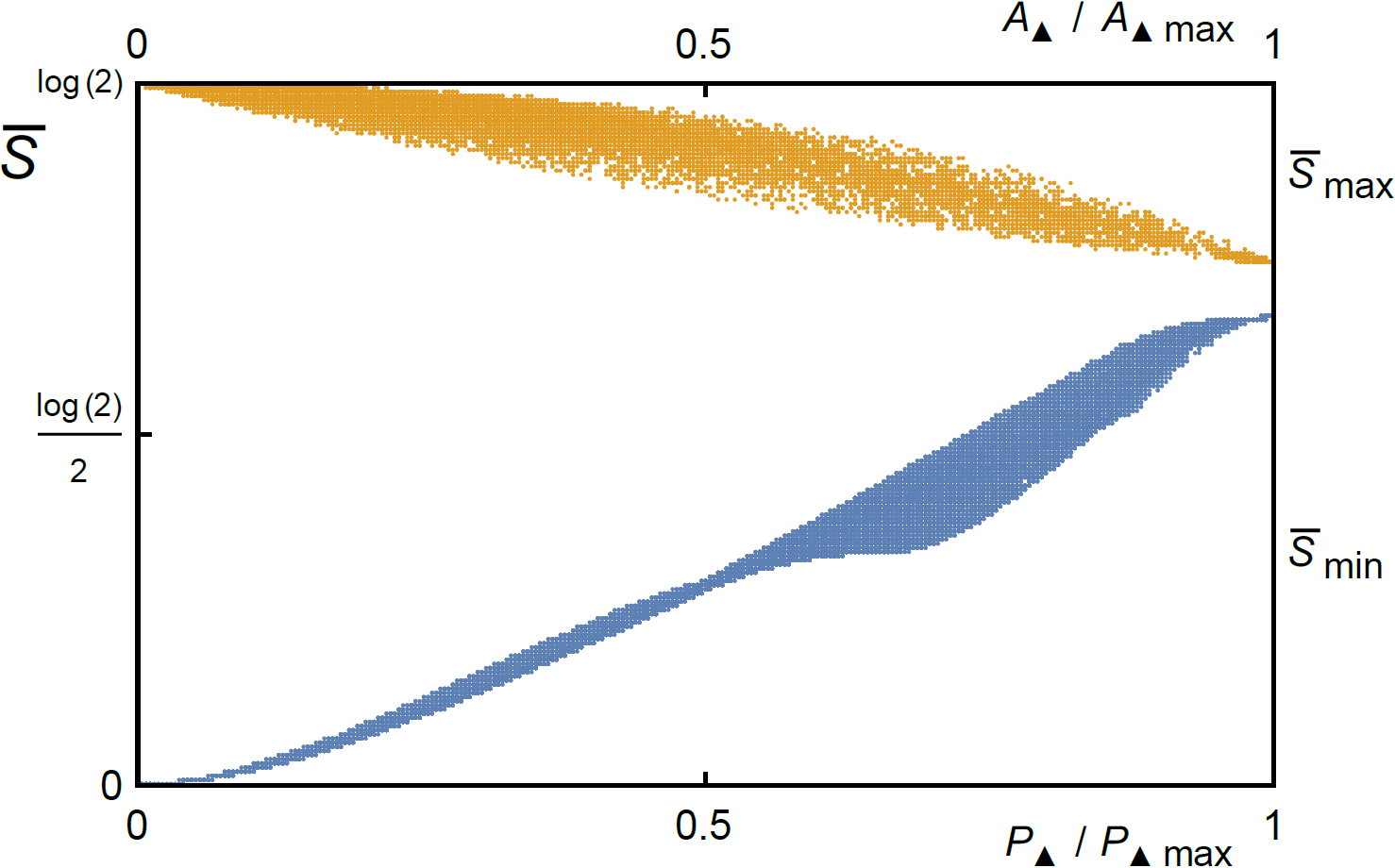}
\par\end{centering}
\protect\caption{Limits of the average entropy {$\bar S$} for $L=3$
   measurements   optimized over pure states from ${\cal H}_2$. The parameter $\bar S_\textrm{max}$ (upper abscissa; yellow) is depicted as a function of
   the smallest area  $A_\blacktriangle$ of the  spherical
triangle while $\bar S_\textrm{min}$ (lower abscissa; blue) is a function of the smallest perimeter  $P_\blacktriangle$. Both parameters are equal zero if  the three bases coincide
    and they attain their maxmal values $A_{\blacktriangle\, \max}$ and $P_{\blacktriangle\,
\max}$ for a set of MUBs.}
\label{figArea}
\end{figure}

Any orthogonal basis in ${\cal H}_2$ can be represented as a pair of antipodal points on the Bloch sphere. With three bases ($3$ pairs of antipodal points)
one can thus associate (generically) eight spherical triangles laying on the Bloch sphere.
 Let $A_\blacktriangle$ and $P_\blacktriangle$ denote respectively the smallest area and the smallest perimeter calculated among all these triangles. Both parameters are equal to zero
  if all three bases do coincide, and achieve the maximum, if the three
  bases in question are mutually unbiased. Such a MUB case with
  $A_{\blacktriangle\, \max}=\pi/2$ and $P_{\blacktriangle\,\max}=3 \pi/2$ is sketched in Fig. \ref{figICSSUR}.
 Moreover, we shall  observe that  the geometric parameters $A_\blacktriangle$ and $P_\blacktriangle$
   are invariant  with respect to any unitary rotation of the reference
frame.

  For any triple of random unitary matrices of order $N=2$ we found (repeating the calculations leading to Fig. \ref{fig22})
  the parameters $A_\blacktriangle$ and $P_\blacktriangle$, and further computed extremal
  values of the mean entropy $\bar S$ optimized over the set of pure states.
  Results presented  in Fig. \ref{figArea} show that the area of the minimal triangle
  carries information concerning the upper bound for the mean entropy
   while the smallest perimeter characterizes the lower bound.
We observe that the proposed geometrical invariants (area and perimeter) reliably capture the property of mutual unbiasedness visible as the narrow entropy window on the right hand side of the plot.

\subsection{$N+1$ measurements in $N$ dimensions}

Consider a family of four bases in ${\cal H}_3$, 
determined by the following unitary matrices 
\begin{equation}
\label{eqn:4-mubs}
U_1={\1}_3, 
\ \
U_2= (\mathcal{F}_3)^{4 \theta/\pi} ,
\end{equation}
\begin{equation}\label{eqn:4-mubs-cd}
U_3= D (\mathcal{F}_3)^{4 \theta/\pi},
\ \
U_4= D^2 (\mathcal{F}_3)^{4 \theta/ \pi}.
\end{equation}
Here $\mathcal{F}_3$ represents the Fourier matrix of size three,
while $D = {\rm  diag} (1,\exp(i 2 \pi /3),\exp(i 2 \pi /3))$.
As in the one--qubit case, all matrices become diagonal for $\theta =0$ and correspond to the same basis,
while for $\theta=\pi/4$ the bases are mutually unbiased.

\begin{figure}
\centering
\includegraphics[width=1\linewidth]{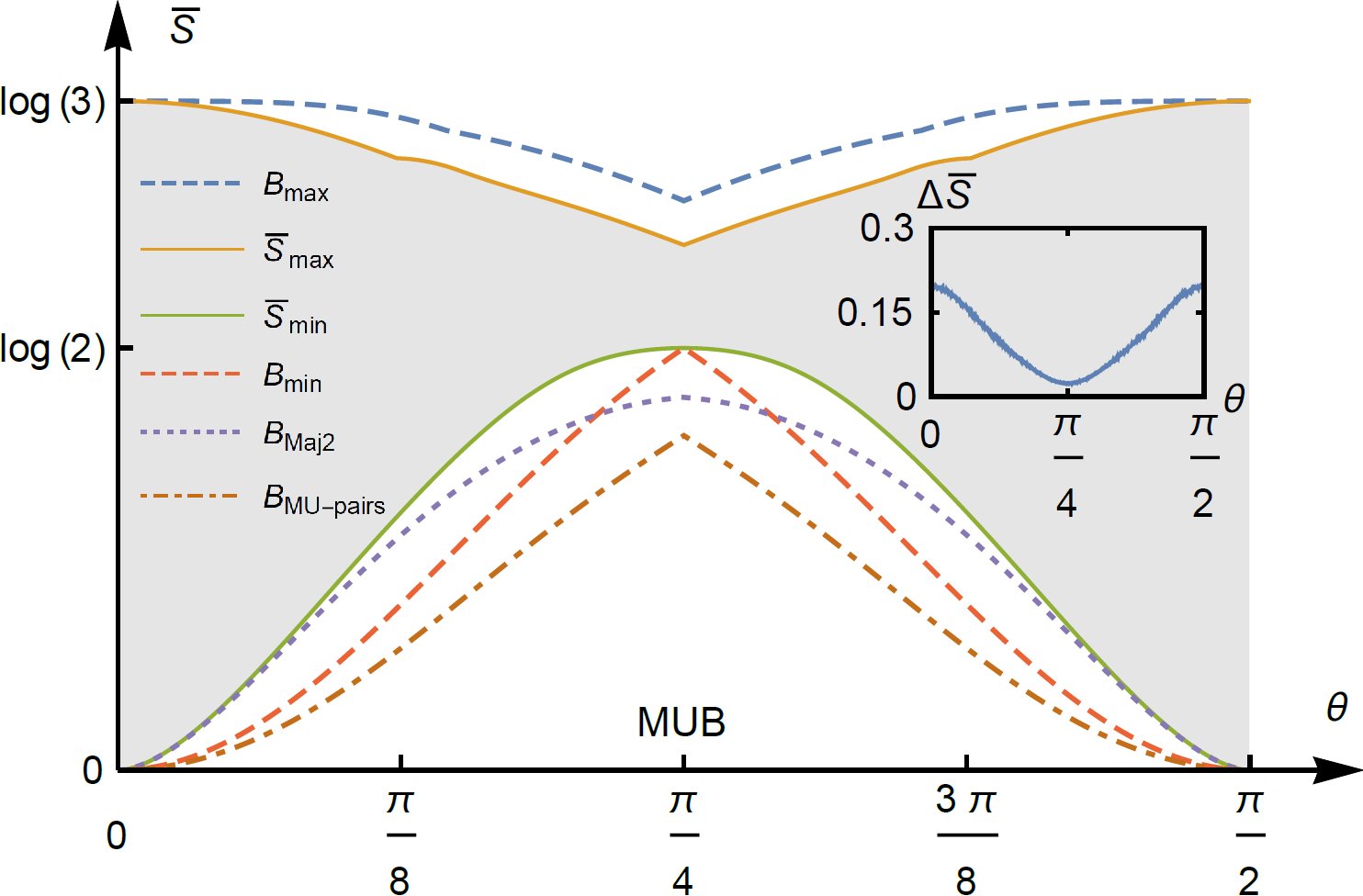}
\caption{As in Fig. \ref{fig:SN2L3}
 for $L=4$ orthogonal measurements in $N=3$ dimensions.
Note a nontrivial behavior of the maximal value,
${\bar S}_{\rm max}$ and the upper bound $B_{\rm max}$, 
which attain their minima for $\theta=\pi/4$ corresponding to MUB.
 }
\label{fig:SN3L4}
\end{figure}

Fig.~\ref{fig:SN3L4} presents the behavior of numerically  computed
maximal and minimal values of the average entropy $\bar S$ 
compared with analytical bounds.
Note that the difference between 
the numerical upper and the lower limits, which are represented by solid lines, is once more the smallest for
$\theta = \pi /4$, corresponding to MUBs. 
A similar property holds as well for the root mean square deviation
of the entropy, $\Delta {\bar S}$, presented in the inset.

Let us now proceed to larger dimensions of the Hilbert space.
In this place we are going to restrict our attention to 
prime power dimensions, $N=p^k$, for which a set 
of $N+1$ MUBs is known \cite{Woott87,DEBZ10}.
In this very case concrete upper and lower bounds for the
average entropy $\bar S$ were obtained by Sanchez--Ruiz \cite{Sa93,Sa95},
\begin{equation}
\begin{split}
&B_{\min}^{SR} \! =
\left\{
\begin{array}{cc}
\log \frac{N+1}{2} &\text{ for $N$ odd}\\
\frac{N}{2 (N+1)} \log \frac{N}{2} + \frac{N/2 + 1}{N+1} \log\left(\frac{N}{2} +1 \right) &\text{ for $N$ even}
\end{array}
\right.  ,
\\
\\
&B_{\max}^{SR} = \log N + \frac{(N-1)^2 \log (N-1)}{(N+1)N (N-2)}.
\end{split}
\label{SR2}
\end{equation}
and later generalized by Wu, Yu and M\o{}lmer in \cite{Molmer}.
Observe that both bounds asymptotically behave as
$\log N - {\rm const}$, where the constant reads 
$a_{\min}^{SR}= \log 2  \approx  0.693$ for the lower bound 
and it vanishes for the upper bound, i.e.  
$a_{\max}^{SR}=0$.

\begin{figure}
\centering
\includegraphics[width=1\linewidth]{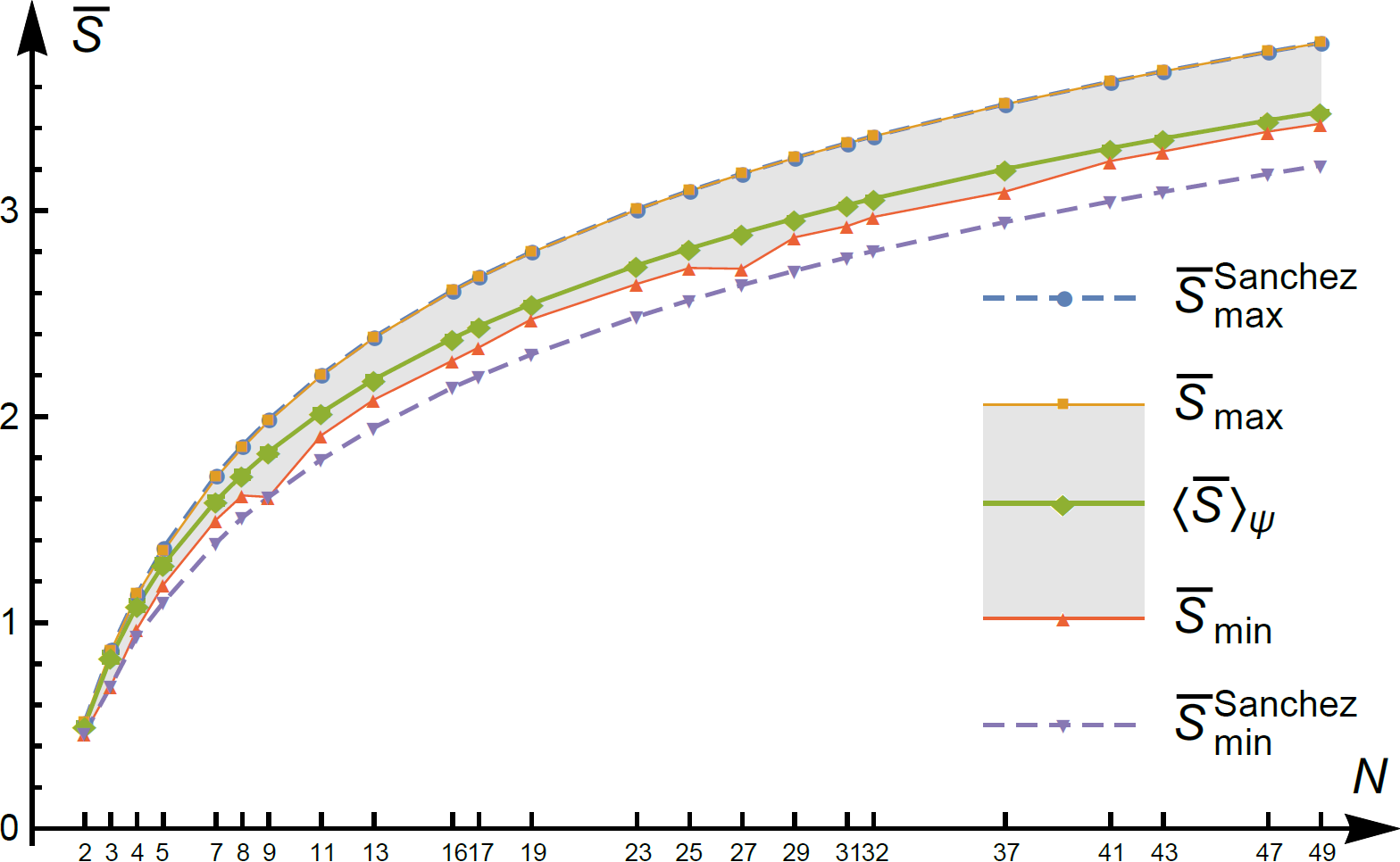}
\caption{Behavior of the average entropy $\overline{S}$ over a set MUBs in $\mathcal{H}_N$ as a function of the dimensions $N$
for power of primes.
Upper and lower bounds of Sanchez-Ruiz are compared with numerical maximum and minimum taken 
over the set of all pure states.}
\label{fig:MUB}
\end{figure}

Figure \ref{fig:MUB} shows both bounds (dashed lines) compared with numerically obtained
lower and upper limits, ${\bar S}_{\rm min}$, and ${\bar S}_{\rm max}$. The central curve
shows the behavior of the mean value $\langle {\bar S} \rangle_{\psi}$, averaged
over entire set of pure states in ${\cal H}_N$ with respect to the unitarily invariant Haar measure.
For dimensions  $N$ of the order of $20$  the error bars, marked in the graph, 
are smaller than the symbol size. Note that the allowed, shaded region,
is very close to the upper bound of Sanchez. This suggests that the bound 
$B_{\max}^{SR}$ is close to optimal, while it is more likely
to improve the lower bound $B_{\min}^{SR}$.

The mean Ingarden-Urbanik entropy of a random pure state is given by~\cite{jones1990entropy}
\begin{equation}
\langle S^{IU}\rangle = \Psi(N + 1) - \Psi(2) \underset{N \to \infty}{\simeq} \log N - (1- \gamma),
\end{equation}
where $\Psi$ is the digamma function and $\gamma\approx 0.577$ is the Euler Gamma constant. 
Unitary invariance of a random state $\ket{\psi}$ gives us, that for a complete
set of MUBs in dimension $N = p^k$ we have
\begin{equation}
\begin{split}
\left \langle \frac{1}{N+1} \sum_{i=1}^{N+1} S^{IU}(\ket{\psi},U_i)\right\rangle 
& \ \ = \Psi(N + 1) - \Psi(2)
\\ &\underset{N \to \infty}{\simeq}  \log N - (1- \gamma).
\end{split}
\end{equation} 

Even though we were in position to study the problem for dimensions
not exceeding $50$, we found it interesting to analyze limiting 
behavior of our results. All three numerical curves can be fitted with 
a general relation $S_j \approx  \log N -{\rm const}$, 
where the fitted value are
$a_{\min} \approx  0.48$, $a_{\max} \approx 0.07$,
and $a_{\rm av} \approx  0.42$ for the average over all pure states.
The latter value coincides well with the asymptotic result 
$a_{\rm \infty} = 1-\gamma \approx  0.422$,
while the  former values contribute to the conjecture that the lower analytical bound (\ref{SR2})
of Sanchez might be easier to improve.

\section{mutually entangled states\label{sec:mutually-entangled-states}}

In the second part of this work we link uncertainty and certainty relations for the average measurement entropy 
with quantum entanglement related to different splittings of the 
composite Hilbert space.

Entanglement of any pure state of a bipartite system $|\psi\rangle\in{\cal H}_{A}\otimes{\cal H}_{B}$
can be characterized by its \textsl{entropy of entanglement} equal
to the von Neumann entropy of the partial trace $E(|\psi\rangle):=S(\rho_{A})$,
where $\rho_{A}={\rm Tr}_{B}|\psi\rangle\langle\psi|$. In analogy
with the notions presented in the previous sections we shall discuss
the \textsl{mutual entanglement} of a given state with respect to
various splittings of the $N\times N$ composite system (we assume
here that both ${\cal H}_{A}$ and ${\cal H}_{B}$ have the dimension
$N$). A direct counterpart of the uncertainty relations (\ref{sumx}) is
\begin{equation}
0\le E_{{\rm min}}\le {\bar E}\le E_{{\rm max}}\leq \ln N,\label{sumy}
\end{equation}
where the mutual entanglement, averaged with respect to $L$
different splittings of the Hilbert space, reads
\begin{equation}
{\bar E}:=\frac1L \sum_{j=1}^{L}E\bigl(W_{j}|\psi\rangle\bigr),
\label{mutent}
\end{equation}
and $\left\{ W_{j}\right\} $ with $j=1,\dots,L$ 
denotes a collection of $L$ bipartite
unitary gates,   i.e. unitary matrices of order $N^{2}$.
 Note that similarly to the case of ordinary
uncertainty relations it is convenient to set $W_{1}=\1$.

\begin{figure}
\begin{centering}
\includegraphics[scale=0.33]{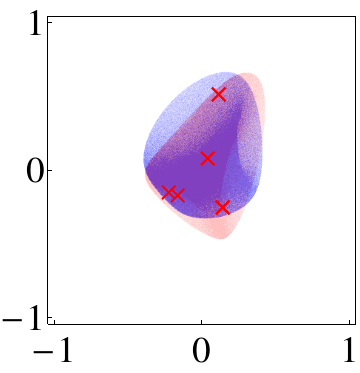}\,\,\,\includegraphics[scale=0.33]{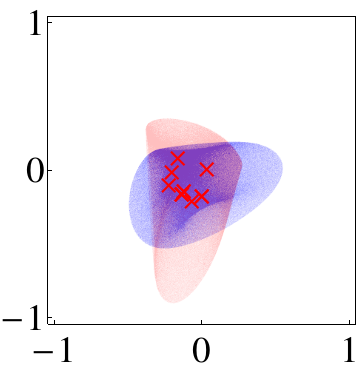}
\par\end{centering}
\protect\caption{
Two examples of projections of $\Real P^3$ and $U (\Real P^3)$ embedded in $\Cplx P^3$ on a plane.
Projection of intersection points marked by crosses correspond to the mutually entangled states.
}
\label{figrp3}
\end{figure}

We are now going to study the simplest case of $N=2$ and $L=2$,
which deals with two splittings only. Apart from the original splitting
given by the computational product basis $|i,j\rangle$ ($i,j=1,2$),
there is the second splitting described by the transformed basis $W_{2}|i,j\rangle$.
We have the following:

\begin{prop} Consider the case $N=2$ and $L=2$, and an arbitrary
unitary matrix $W_{2}\in \mathcal{U}(4)$. Then a) the upper bound in (\ref{sumy})
is saturated as there exists a \textsl{mutually entangled} state $|\psi_{{\rm ent}}\rangle$,
so that $E_{{\rm max}}= \log 2$; b) the lower bound in (\ref{sumy})
is saturated as well and there exists a \textsl{mutually separable}
state $|\psi_{{\rm sep}}\rangle$, so that $E_{{\rm min}}=0$.
\end{prop}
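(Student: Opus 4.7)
The plan is to prove parts (a) and (b) by showing that two distinguished submanifolds of the projective Hilbert space $\mathbb{C}P^{3}=\mathbb{P}(\mathcal{H}_{4})$ always intersect after one of them is displaced by $W_{2}$: the manifold $\mathcal{M}$ of maximally entangled pure states for (a), and the manifold $\Sigma$ of separable (product) pure states for (b). Writing a state $|\psi\rangle=\sum_{ij}\Psi_{ij}|ij\rangle$ through its coefficient matrix $\Psi$, the state is maximally entangled iff $\sqrt{2}\Psi$ is unitary, and is a product iff $\det\Psi=0$. Hence it suffices to verify $\mathcal{M}\cap W_{2}(\mathcal{M})\neq\emptyset$ and $\Sigma\cap W_{2}(\Sigma)\neq\emptyset$; any state in such an intersection is mutually entangled, respectively mutually separable, and its average entropy $\bar E$ saturates the corresponding bound.

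For part (a) I would first identify $\mathcal{M}\simeq U(2)/U(1)\simeq SU(2)/\mathbb{Z}_{2}\simeq SO(3)\simeq\mathbb{R}P^{3}$ and confirm that its embedding into $\mathbb{C}P^{3}$ is Lagrangian with respect to the Fubini--Study symplectic form. The cleanest way is to exhibit $\mathcal{M}$ as the fixed-point set of an antiholomorphic involution of $\mathbb{C}P^{3}$ --- explicitly, complex conjugation written in the magic (Bell-like) basis of $\mathcal{H}_{2}\otimes\mathcal{H}_{2}$ --- whose fixed locus is automatically a standard Lagrangian $\mathbb{R}P^{3}\subset\mathbb{C}P^{3}$. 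Every $W_{2}\in\mathcal{U}(4)$ writes as $e^{iH}$ for some Hermitian $H$, so it descends to a Hamiltonian symplectomorphism of $\mathbb{C}P^{3}$, and the non-displacability theorem of Cho~\cite{Cho04,Tam08} gives $W_{2}(\mathcal{M})\cap\mathcal{M}\neq\emptyset$. Any state $|\psi_{\rm ent}\rangle$ in the intersection satisfies $E(|\psi_{\rm ent}\rangle)=E(W_{2}|\psi_{\rm ent}\rangle)=\log 2$, so that $\bar E=E_{\rm max}=\log 2$.

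For part (b) I would identify $\Sigma$ with the image of the Segre embedding $\mathbb{C}P^{1}\times\mathbb{C}P^{1}\hookrightarrow\mathbb{C}P^{3}$, which in the computational basis is the smooth non-degenerate quadric hypersurface $\det\Psi=z_{0}z_{3}-z_{1}z_{2}=0$. Because $W_{2}$ acts on $\mathbb{C}P^{3}$ as a projective linear automorphism, the image $W_{2}(\Sigma)$ is again a smooth quadric of degree two. By the projective dimension theorem --- equivalently B\'ezout's theorem applied to two surfaces in $\mathbb{C}P^{3}$ --- the intersection $\Sigma\cap W_{2}(\Sigma)$ is always non-empty: it either coincides with $\Sigma$ when $W_{2}$ preserves the Segre variety, or forms a complete intersection of complex dimension one and degree four. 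Any $|\psi_{\rm sep}\rangle$ in this intersection satisfies $E(|\psi_{\rm sep}\rangle)=E(W_{2}|\psi_{\rm sep}\rangle)=0$, yielding $\bar E=E_{\rm min}=0$.

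The principal obstacle lies in part (a): one must confirm that the manifold $\mathcal{M}$ is embedded in $\mathbb{C}P^{3}$ precisely as the standard Lagrangian $\mathbb{R}P^{3}$ covered by Cho's theorem, and not merely as a diffeomorphic copy. The magic-basis construction resolves this by realising $\mathcal{M}$ as a totally real submanifold of maximal dimension with respect to the chosen complex structure. Part (b), by contrast, is purely algebro-geometric and uses nothing beyond the intersection theory of quadrics in $\mathbb{C}P^{3}$.
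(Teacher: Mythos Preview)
Your proposal is correct but follows a different route from the paper's primary proof. The paper (Appendix~A) brings $W_2$ into the three-parameter canonical form of a two-qubit gate \cite{KC01,HVC02} and then \emph{exhibits} the states explicitly: the Bell state $(\ket{00}+\ket{11})/\sqrt{2}$ is maximally entangled with respect to both splittings for all parameter values, and a short case analysis in $b_2$ produces an explicit mutually separable product state. Your argument for (a) --- realising the maximally entangled manifold as the standard Lagrangian $\mathbb{R}P^3$ via the magic basis and invoking non-displacability --- is in fact sketched in the paper's main text as an alternative; note, however, that the relevant result is Oh's computation of the Floer cohomology of $\mathbb{R}P^n$ \cite{Oh93,Tam08}, not Cho's theorem \cite{Cho04}, which concerns the Clifford torus. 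Your argument for (b) via the Segre quadric and the projective dimension theorem does not appear in the paper and is arguably more conceptual: it bypasses the canonical form entirely and, since $2\cdot 2(N-1)\ge N^2-1$ also holds at $N=3$, the same dimension count would settle the two-qutrit case as well. The trade-off is that the paper's proof is elementary, self-contained, and yields explicit witnesses, whereas your geometric approach explains structurally why the intersections must be non-empty and connects naturally to the paper's conjecture for general $N$.
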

A proof of this proposition based on canonical form of a two-qubit
gate \cite{KC01,HVC02} is provided in Appendix A.
To show part b) of the above proposition one can also rely on
geometric properties of projective spaces. Let us recall
that in the two-qubit case the manifold of maximally entangled states
is $\mathcal{U}(2)/\mathcal{U}(1)=\mathbb{R}P^{3}$. As it is known that $\mathbb{R}P^{M}$
is nondisplacable in $\mathbb{C}P^{M}$  with respect to transformations
by $\mathcal{U}(M+1)$ \cite{Oh93,Tam08},
two real manifolds embedded into a complex one have to 
intersect -- see Fig. \ref{figrp3}.
In a particular case $M=3$, 
the presence of the intersection points directly implies 
that for any choice of $W_2\in \mathcal{U}(4)$
there exists a \textsl{mutually entangled state} $|\psi_{{\rm ent}}\rangle$
maximally entangled with respect to the partition of the Hilbert space in 
the computational basis $W_{1}=\1$ and in the basis rotated by $W_2$.

For any family of $L=2$ unitary matrices $W_1, W_2$ of order $4$
the upper bound for the averaged entanglement is saturated, 
${\bar E}=\log 2$.  Therefore we show in Fig. \ref{fig:MEB,n=2}
the averaged entanglement ${\bar E}$ for a family of $L=3$
unitary matrices of order $4$ 
explicitly given later in Eq. (\ref{eqn:MEB2-family}). This plot, analogous to Fig. \ref{fig:SN2L3},
displays nontrivial upper bounds for ${\bar E}$.
Moreover, these results suggest that there exists a state mutually 
separable with respect to all three splittings of ${\cal H}_4$
into ${\cal H}_2 \otimes {\cal H}_2$.

Numerical results obtained for the dimension $N=3$ suggest that \emph{for any}
unitary gate $W_{2}$ of order $9$, there exists a related mutually
entangled state, so that we conjecture that in general $E_{{\rm max}}=\log N$. To prove
this conjecture it would be enough to show that the space of maximally
entangled states $\mathcal{U}(N)/\mathcal{U}(1)$ is non-displacable in $\mathbb{C}P^{N^{2}-1}$
with respect to action of $\mathcal{U}(N^2)$.
Note that the dimension of this space is $N^{2}-1$, and equals the
\textsl{half} of the dimension of the embedding space, as it forms
a Lagrangian manifold. Since the similar scenario occurred for quantum coherences,  it is thus tempting to conjecture that the above statement,
true if $N=2$, holds also for any $N\ge3$.

\section{Mutually entangling gates\label{sec:Mutually-entangling-gates}}

In the previous section we introduced the concept of mutual entanglement
and studied this notion in the simplest case of two different splittings
of the composite Hilbert space. Now we aim to consider an arbitrary
number of $L\ge3$ bi-partite unitary matrices $W_{j}$, $j=1,\dots,L$
(with $W_{1}=\1$), which define different tensor-product structures.
Since one copes with $L$ different splittings of the entire system
into subsystems, one can define the notion of separable and maximally
entangled states with respect to these partitions and ask about
the quantum states for which $\bar E$ given in Eq. (\ref{mutent}) is
minimal or maximal. 

The same approach can be applied
for instance in the particular case $N=4$, as the system consists
of two ququarts or rather four qubits $A,B,C,D$. 
For instance,  setting $L=3$ and choosing $W_{2}$
and $W_{3}$ to be suitable permutation matrices, 
which define bipartite splittings
$AB|CD$, $AC|BD$ and $AD|BC$, respectively, one can study the
mutual entanglement with respect to different partitions and look
for maximally entangled multipartite states \cite{Sc04,FFPP08,AC13,GZ14} such that all their reductions are maximally mixed.
In the case of four qubits, there are no pure states, maximally entangled with
respect to three above partitions \cite{GBP98,HS00}.

In the case of bi--partite unitary gates one distinguishes \textsl{special
perfect entanglers}, which transform a product basis into maximally
entangled basis \cite{Re04}. More formally, a unitary matrix $W$ acting
on ${\cal H}_{N}\otimes{\cal H}_{N}$ will be briefly called an \textsl{entangling
gate}\textsl{\emph{,}} if all its columns are maximally entangled \cite{Co11},
so it transforms separable basis states
into maximally entangled states 
\begin{equation}
E(W|i,j\rangle)\ =\ \log N,\ \ {\rm for}\ i,j=1,\dots,N.
\label{entagate}
\end{equation}
Such gates are known for any $N$ \cite{We01,WGC03}, so for $L=2$
there exists a gate for which the minimal mutual entanglement ${\bar E}_{\textrm{min}}$
will not be smaller than $\frac{1}{2}\log N$. Quite interestingly, for two--qubit
systems such gates are especially distinguished, as they maximize
the \textsl{entangling power}, i.e. the average entropy of entanglement
produced from a generic separable state \cite{ZZF00}.

Analyzing the case of a larger number of unitaries $L\ge3$, we are
going to demonstrate the existence of \textsl{mutually entangling
gates}, able to transform product states into states maximally entangled
with respect to all $L$ splittings in question. In other words, these
unitary matrices are formed out of maximally entangled vectors, which
remain maximally entangled in any transformed splitting. In a direct
analogy to the notion of mutually unbiased bases \cite{DEBZ10} we
define \textsl{mutually entangled gates}.

\begin{definition} \label{def:MEB}
We say that a collection of unitary matrices $W_1=\id, W_2, \dots, W_L
\in\mathcal{U}(N^{2})$ is  \textsl{mutually entangled} if for $i \neq j$ the gates
$W_i^{\dagger}W_j$ satisfy condition (\ref{entagate}), i.e. the columns of the 
matrix $W_j$ are maximally entangled in the basis given by $W_i$ and vice versa. 
\end{definition}
We shall also say, that the
columns of these unitary matrices form \emph{mutually entangled bases}
(MEB). As is shown below, both concepts happen to be closely related.

\begin{theorem} If there exists a set of $m$ MUBs in ${\cal H}_{N}$,
then there also exists a set of $m$ MEBs in ${\cal H}_{N}\otimes{\cal H}_{N}$.
\end{theorem}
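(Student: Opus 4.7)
The plan is to construct the MEBs explicitly from the MUBs via a generalized Bell-state recipe, and then to verify the mutual-entanglement condition by reducing it to a character-sum identity for the MUB transition matrices. Let $B_k=\{|e^{(k)}_l\rangle\}_{l=0}^{N-1}$ denote the $m$ MUBs (with $B_1$ the computational basis), and let $U_k$ be the unitary with $U_k|l\rangle=|e^{(k)}_l\rangle$. For each $k$ and each pair $(r,s)\in\{0,\ldots,N-1\}^2$ I would define
\[ |\Omega^{(k)}_{r,s}\rangle \;=\; \frac{1}{\sqrt N}\sum_{l=0}^{N-1} \omega_N^{ls}\,|e^{(k)}_l\rangle\otimes|e^{(k)}_{l\oplus r}\rangle, \qquad \omega_N=e^{2\pi i/N}, \]
with $\oplus$ denoting addition modulo $N$. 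Let $W_k^{0}$ be the unitary whose columns are the vectors $|\Omega^{(k)}_{r,s}\rangle$, and set $W_k=(W_1^{0})^\dagger W_k^{0}$ so that $W_1=\id$ as demanded by Definition~\ref{def:MEB}.

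The first step is to confirm that, for each $k$, the collection $\{|\Omega^{(k)}_{r,s}\rangle\}_{r,s}$ is an orthonormal basis of $\mathcal{H}_{N^2}$ consisting of maximally entangled states. Maximal entanglement is evident from the Schmidt decomposition of $|\Omega^{(k)}_{r,s}\rangle$ in the bases $B_k\otimes B_k$, while orthonormality follows from orthogonality of $B_k$ combined with the character sum $\sum_l \omega_N^{l(s-s')}=N\delta_{s,s'}$. Using the operator--vector correspondence $|\Omega^{(k)}_{r,s}\rangle \leftrightarrow U_k Z^s X^{-r} U_k^T/\sqrt N$, where $X|l\rangle=|l\oplus 1\rangle$ and $Z|l\rangle=\omega_N^l|l\rangle$ are the clock and shift operators, the overlap between vectors drawn from two different bases takes the form
\[ \langle \Omega^{(i)}_{a,b}|\Omega^{(j)}_{r,s}\rangle \;=\; \frac{1}{N}\tr\!\bigl( X^a Z^{-b}\, H\, Z^s X^{-r}\, H^T\bigr), \]
where $H=U_i^\dagger U_j$ is a complex Hadamard by the MUB assumption, $|H_{\alpha\beta}|^2=1/N$.

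The second and main step is to verify the MEB condition: $W_i^\dagger W_j|r,s\rangle$ must be maximally entangled for every $i\neq j$ and every $(r,s)$. Reshaping the matrix of overlaps above as an $N\times N$ matrix in $(a,b)$ and invoking the orthogonality of the Weyl operators $\{X^cZ^d\}$, the maximal-entanglement condition reduces to two requirements on the Weyl expansion $A=HZ^sX^{-r}H^T=\sum_{c,d}\alpha_{c,d}\,X^cZ^d$: equidistribution, $\sum_d |\alpha_{c,d}|^2 = 1/N$ for every $c$; and off-diagonal cancellation, $\sum_b \omega_N^{(a-\tilde a)b}\,\alpha_{-a,b}\,\overline{\alpha_{-\tilde a,b}}=0$ for $a\neq\tilde a$. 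The main obstacle is to establish these two identities from the MUB hypothesis; they hold for the Fourier-type Hadamards arising in the Wootters--Fields construction of MUBs in prime power dimensions, for which the cancellations follow from the standard Gauss-sum vanishing $\sum_n\omega_N^{\alpha n^2+\beta n}=0$ for $(\alpha,\beta)\neq(0,0)$, and this character-sum verification is the technical heart of the proof.
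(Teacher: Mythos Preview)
Your construction does not yield mutually entangled bases; there is a concrete obstruction already for $N=2$. Take $U_1=\id$ and $U_2=H$ the real Hadamard matrix. The singlet $|\psi^-\rangle=(|01\rangle-|10\rangle)/\sqrt2$ is invariant under $U\otimes U$ for every $U\in\mathcal U(2)$, so $|\Omega^{(k)}_{1,1}\rangle=\pm|\psi^-\rangle$ for \emph{every} $k$; likewise $|\Omega^{(2)}_{0,0}\rangle=\tfrac1{\sqrt2}(|{+}{+}\rangle+|{-}{-}\rangle)=\tfrac1{\sqrt2}(|00\rangle+|11\rangle)=|\Omega^{(1)}_{0,0}\rangle$. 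Hence the column of $W_1^{\dagger}W_2$ indexed by $(0,0)$ is the product state $|0,0\rangle$, contradicting Definition~\ref{def:MEB}. In your Weyl-expansion language this is the case $r=s=0$: then $A=HH^{T}=\id$, so $\alpha_{0,0}=1$ and all other $\alpha_{c,d}=0$, and your equidistribution requirement $\sum_d|\alpha_{c,d}|^2=1/N$ fails outright. The claimed Gauss-sum identity $\sum_n\omega_N^{\alpha n^2+\beta n}=0$ for $(\alpha,\beta)\neq(0,0)$ is also false in general (quadratic Gauss sums over $\mathbb Z/p$ have modulus $\sqrt p$, not zero), so the proposed verification for ``Fourier-type'' Hadamards does not go through either.

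The structural reason your recipe fails is that acting with the \emph{same} $U_k$ on both tensor factors forces every basis $\{|\Omega^{(k)}_{r,s}\rangle\}$ to respect the $U\otimes U$--invariant splitting into symmetric and antisymmetric subspaces; the bases therefore share vectors regardless of how unbiased the $U_k$ are. The paper sidesteps this by a one-sided construction: fix a Latin square $\lambda$, set $P=\sum_{k,l}|\lambda(l,k),k\rangle\langle l,k|$, and define $W^{(i)}=P(\id\otimes M_i)P^{T}$. Then $W^{(i)\dagger}W^{(j)}=P(\id\otimes M_i^{\dagger}M_j)P^{T}$ is exactly Werner's shift-and-multiply matrix built from the single Hadamard $M_i^{\dagger}M_j$, whose columns are maximally entangled for any complex Hadamard. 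No character-sum computation and no special MUB structure are needed; only the unbiasedness $|(M_i^{\dagger}M_j)_{ab}|^2=1/N$ is used.
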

In other words, Theorem 3, states that $m$ mutually unbiased bases
provide the set of $m$ mutually entangling gates, for which 
the average entanglement ${\bar E}$ satisfies
\begin{equation}
\frac{m-1}{m} \log N \leq {\bar E} \leq \log N.
\end{equation}

We prove the above theorem by constructing the relevant entangling gates. 
First we recall the construction of unitary bases  by Werner~\cite{We01}, 
called 'shift and multiply'. For a given Latin square
$\{\lambda(j,k)\}_{j,k=1}^N$ and a collection of Hadamard matrices
$H^{(1)},H^{(2)}, \dots , H^{(N)}$ one constructs unitary matrices
\begin{equation}
U^{(i,j)} = \sum_{k=1}^N H^{(j)}_{i,k} \ket{\lambda(j,k)} \bra{k} ,
\end{equation}
which form an orthogonal basis of the 
Hilbert-Schmidt space of complex matrices of order $N$.
Thus the columns of a matrix 
\begin{equation}
V = \frac{1}{\sqrt{N}}\sum_{k,i,j=1}^{N} H^{(j)}_{i,k} \ket{\lambda(j,k), k} \bra{i , j}
\end{equation}
form a maximally entangled basis in  $\mathbb{C}^{N^2}$.
Note, that $V$ can be written as ($T$ denotes the transposition)
\begin{equation}
V = P \left(H^{(1)}{}^T \oplus H^{(2)}{}^T \oplus \dots \oplus H^{(N)}{}^T\right) U_{\rm SWAP},
\end{equation}
where $U_{\rm SWAP}$ is a swap permutation matrix
and $P$ is a permutation matrix given by
\begin{equation} \label{eqn:latin-permutation}
P = \sum_{k,l} \ketbra{\lambda(l,k),k}{l,k}.
\end{equation}

The rows of the matrix $V$ do not generate a maximally entangled basis,
but if we permute its columns and define 
\begin{equation}
W = P \left(H^{(1)}{}^T \oplus H^{(2)}{}^T \oplus \dots \oplus H^{(N)}{}^T\right) P^T,
\end{equation}
then the rows and columns of the matrix $W$ generate a maximally entangled basis.
The above reasoning leads to the following 
explicit construction of mutually entangled gates.

Assume that we are given a Latin square $\{\lambda(j,k)\}_{j,k=1}^N$
and  a collection of $k$ mutually unbiased bases $M_1,M_2,\dots,M_k$ of size $N$.
The bases are unbiased, that is
\begin{equation}
M_i^{\dagger} M_j \text{\ is a rescaled Hadamard matrix for } i \neq j.
\end{equation}
Using these matrices we introduce a collection of bases
\begin{equation}
W^{(i)} =  P (\id \otimes M_i) P^T.
\end{equation}
We have the following:
\begin{corollary}
Let $\lambda$ be a Latin square of size $N$ and let $P$ be defined as
in~\eqref{eqn:latin-permutation}. Then the bases $W^{(i)}$ are mutually entangled.
\end{corollary}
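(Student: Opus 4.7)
The plan is to reduce the verification of Definition~\ref{def:MEB} for the bases $W^{(i)}$ to the `shift and multiply' construction that has just been used in the proof of Theorem~3. As a first step, I would compute, for $i \neq j$,
\[
(W^{(i)})^\dagger W^{(j)} = P(\id \otimes M_i^\dagger) P^T \cdot P(\id \otimes M_j) P^T = P\bigl(\id \otimes M_i^\dagger M_j\bigr) P^T,
\]
where the inner product $P^T P = \id$ cancels because $P$ is a permutation matrix (hence real and orthogonal), and the block structure $(\id \otimes A)(\id \otimes B) = \id \otimes AB$ does the rest. By the defining MUB property invoked in the hypothesis, $H := \sqrt{N}\, M_i^\dagger M_j$ is a genuine Hadamard matrix (unitary with all entries of modulus one), so the product reduces to $\frac{1}{\sqrt{N}}\, P(\id \otimes H) P^T$.

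Next, I would match this expression with Werner's construction recalled in the proof of Theorem~3. The key observation is that $\id \otimes H$ is exactly the block-diagonal sum $H \oplus H \oplus \cdots \oplus H$ consisting of $N$ identical copies of $H$. Therefore $P(\id \otimes H) P^T$ coincides with the matrix $W$ appearing in the proof of Theorem~3 for the particular choice $H^{(\ell)\,T} = H$ for every $\ell = 1, \dots, N$. Invoking the conclusion of that proof, both the rows and the columns of $P(\id \otimes H) P^T$ form a maximally entangled basis of $\mathbb{C}^N \otimes \mathbb{C}^N$; in particular, every column of $(W^{(i)})^\dagger W^{(j)}$ is a maximally entangled pure state, so condition~(\ref{entagate}) is satisfied.

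Since this holds for every pair $i \neq j$, Definition~\ref{def:MEB} is met and the family $\{W^{(i)}\}$ is mutually entangled, which is exactly the claim. The only genuinely non-routine ingredient is the identification with the shift-and-multiply pattern, i.e.\ checking that using the \emph{same} Hadamard matrix in every block of the direct sum still produces an entangling gate after conjugation by the Latin-square permutation $P$. This, however, is a special case of the statement already established in the proof of Theorem~3, which permits arbitrary Hadamards per block; hence the corollary follows by straightforward bookkeeping on top of that earlier construction.
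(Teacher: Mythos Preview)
Your proposal is correct and follows essentially the same route as the paper: compute $(W^{(i)})^\dagger W^{(j)} = P(\id\otimes M_i^\dagger M_j)P^T$ using $P^TP=\id$, then conclude from the Hadamard nature of $M_i^\dagger M_j$ that this is an entangling gate. The paper's proof ends with the terse observation that the result ``is a unitary basis''; your version adds value by spelling out the identification $\id\otimes H = H\oplus\cdots\oplus H$ and thereby recognising the product as the special case $H^{(\ell)\,T}=H$ of the shift-and-multiply matrix $W$ introduced a few lines earlier, which makes the appeal to condition~(\ref{entagate}) completely explicit.
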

\begin{proof}

We write for $i \neq j$.
\begin{equation}
\begin{split}
W^{(i)}{}^{\dagger} W^{(j)} &= (P (\id \otimes M_i) P^T)^{\dagger} P (\id \otimes M_j) P^T\\
&=  P (\id \otimes M_i^{\dagger} M_j) P^T.
\end{split}
\end{equation}
Since $ M_i^{\dagger} M_j$ is a rescaled Hadamard matrix we obtain, that $W^{(i)}{}^{\dagger}
W^{(j)}$ is a unitary basis.
\end{proof}
To demonstrate how the above construction works in action
we provide in the Appendix B the two collections of mutually
entangled bases, respectively  for $2\times 2$ and $3 \times 3$ systems.

\subsection{Mutual entanglement for two--qubit system} 
Let us consider a family of matrices defined in Eq.~\eqref{eqn:3-mubs}
which interpolates between $\{\id,\id,\id\}$ for $\alpha =0$ and MUB for $\alpha = \pi/4.$
From the above family we construct bases of $\mathbb{C}^4$ as 
\begin{equation} \label{eqn:MEB2-family}
\begin{split}
\Big\{W_0&=\1_4,
W_1= \left(
\begin{smallmatrix}
 \cos (\alpha ) & 0 & 0 & \sin (\alpha ) \\
 0 & \cos (\alpha ) & \sin (\alpha ) & 0 \\
 0 & \sin (\alpha ) & -\cos (\alpha ) & 0 \\
 \sin (\alpha ) & 0 & 0 & -\cos (\alpha ) \\
\end{smallmatrix}
\right), \\
W_2 &=\left(
\begin{smallmatrix}
 \cos (\alpha ) & 0 & 0 & \sin (\alpha ) \\
 0 & \cos (\alpha ) & \sin (\alpha ) & 0 \\
 0 & i \sin (\alpha ) & -i \cos (\alpha ) & 0 \\
 i \sin (\alpha ) & 0 & 0 & -i \cos (\alpha ) \\
\end{smallmatrix}
\right)
\Big\}.
\end{split}
\end{equation}
In the case of $\alpha = \pi/4$ the above family forms mutually entangled bases.
We analyzed lower and upper bounds for the
average entanglement ${\bar E}$ with respect to three splittings of ${\cal H}_4$,
as  defined in (\ref{mutent}).
%
In the case of $L=2$ splittings 
the upper bound, ${\bar E}_{\rm max}=\log 2$,
can be saturated, 
but for $L=3$ the upper bound becomes not trivial -- see Fig.~\ref{fig:MEB,n=2}.
%
%
The average entanglement ${\bar E}$ attains its minimal value, 
${\bar E}^{MEB}_{\rm max}$  given by \eqref{upperMUB},
for three mutually entangled bases corresponding to $\alpha = \pi/4.$

\begin{figure}
\centering
\includegraphics[width=0.87\linewidth]{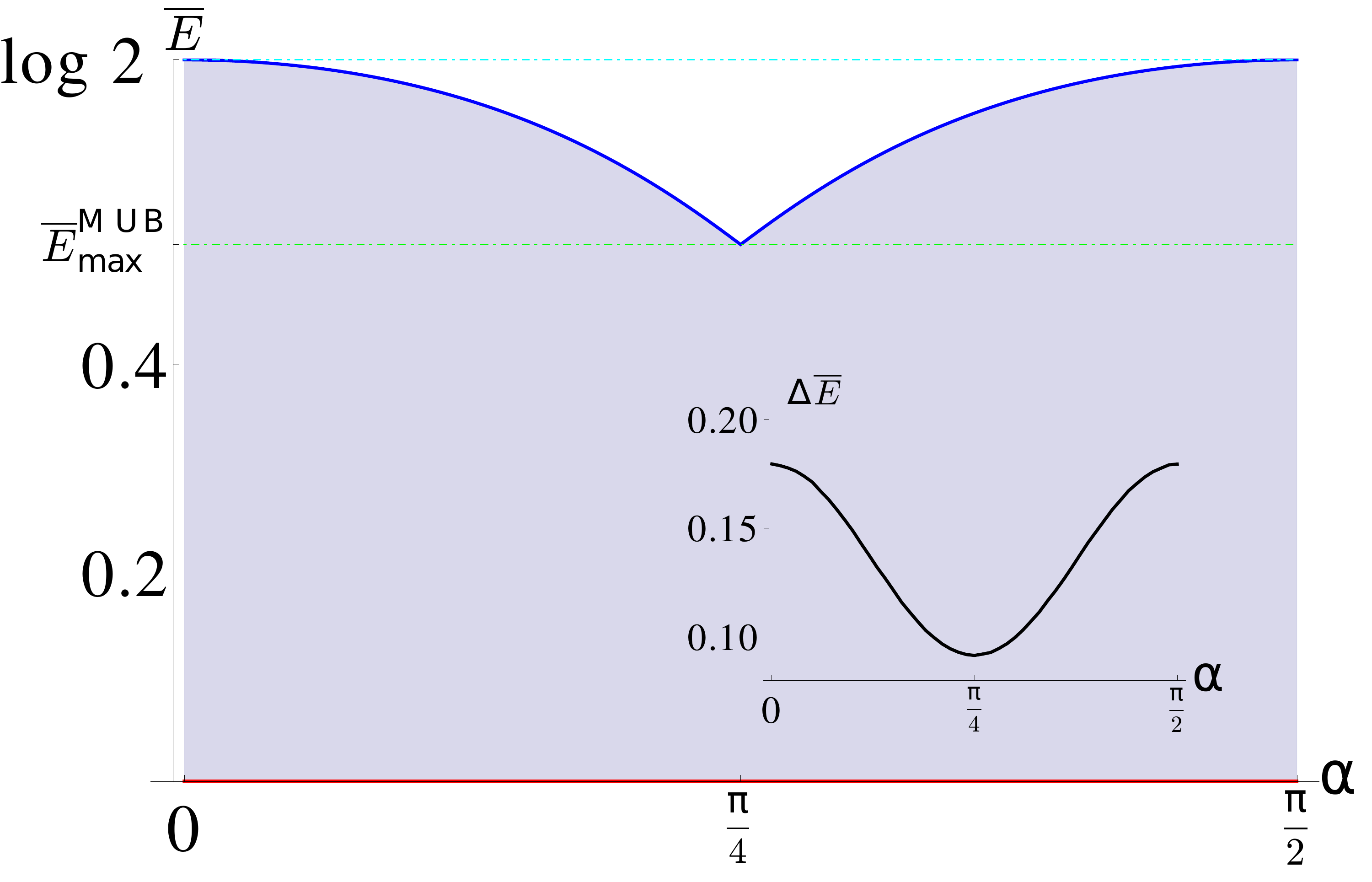}
\caption{
Average entanglement $\bar E$ for family of three unitary gates~\eqref{eqn:MEB2-family} of size $4$
as a function of the phase $\alpha$. 
For $\alpha = \pi/4$ corresponding to MEB the upper bound for $\bar E$ attains its minimum, 
as well as the root square deviation $\Delta {\bar E}$ shown in the inset. 
}
\label{fig:MEB,n=2}
\end{figure}


\section{Concluding remarks}

A lot of work was recently done to improve and generalize
entropic uncertainty relations, which provide lower bounds
for the average entropy of probability vectors describing  measurements
in several orthogonal bases. Following the ideas of Sanchez 
\cite{Sa93} in this work we analyzed in parallel also upper bounds for the
average entropy, and obtained entropic uncertainty (\ref{bmin}) and certainty (\ref{bmax}) relations
valid for an arbitrary number of measurements of any pure state in ${\cal H}_N$.
Main motivation for such a study stems from a search of states which are 
simultaneously unbiased with respect to bases determining orthogonal measurements.
Such states display the effects of quantum coherence with respect to all these bases,
so that the average entropy becomes maximal.

In the case of any  $L=2$ measurements in an arbitrary $N$--dimensional 
Hilbert space mutually coherent states exist, so the upper bound for the 
average entropy is saturated, ${\bar S}=\log N$. 
This result, related to non-displacablity of the great torus
in complex projective space $\mathbb{C}P^{N-1}$~\cite{Cho04},
does not hold for a larger number of $L\ge 3$ measurements,
for which certainty relations become non-trivial. Numerical results 
show that the analytic upper bound derived for MUBs
by Sanchez  \cite{Sa93} is rather precise, so 
it would be desirable to generalize them for other collections
of orthogonal bases.

Analyzing probabilities obtained in sequence
of $L$ orthogonal measurements of a quantum state 
one can also interpreted them
as a result of a single generalized measurement $P$, called positive
operator valued measure (POVM), which consist of $N\cdot L$ projection
operators. Hence the averaged entropy (\ref{sumx}) of $L$ orthogonal
measurements is equal, up to an additive constant $\log L$, to the
entropy of the probability vector describing the POVM. Furthermore,
the so-called {\sl informational power} of $P$ \cite{DAS11} 
associated to the set of MUBs, is closely related with the minimal
entropy ${\bar S}_{\rm min}$, averaged over $L=N+1$ measurements and minimized
over the set of all pure states. This quantity occurs to be equal
to $\log N-{\bar S}_{\rm min}$ \cite{SS04,DA14,Sz04}, 
while the quantity $\log N-{\bar S}_{\rm max}$
coincides with the minimal relative entropy. 

A complete set of $L=N+1$ mutually unbiased bases in ${\cal H}_N$
forms an optimal scheme of a quantum measurement 
distinguished be several statistical properties \cite{DEBZ10}.
Our numerical results allow us to conjecture 
that the complete set of MUBs minimizes fluctuations 
of the average entropy while varying the pure state investigated.  
\begin{conjecture} For any choice of $L=N+1$ measurements in a dimension
$N=p^{k}$ the lower bound for the averaged entropy $\bar S$ 
achieves its maximum and the upper bound achieves its minimum if $L$
unitary matrices form a MUB.
\end{conjecture}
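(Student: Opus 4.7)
My plan is to reduce the conjecture to an extremization of the purity coefficient $\mathcal{P}$ introduced in Theorem~\ref{THP}, exploiting the fact that a complete set of MUBs forms a complex projective $2$-design and so renders $\mathcal{P}$ state-independent. First I would verify monotonicity of the two bounds from Theorem~2 in their respective purity inputs. The bound $B_{\min}$ of Eq.~(\ref{bmin}) is the Harremo\"es--Tops{\o}e expression for the minimal Shannon entropy of a probability distribution with a fixed second moment, hence a non-increasing function of $\mathcal{P}_{\max}$. The bound $B_{\max}$ of Eq.~(\ref{bmax}) is similarly non-increasing in $\mathcal{P}_{\min}$: raising $\mathcal{P}_{\min}$ raises $r$, which drives the probability vector $Q$ away from the uniform distribution and therefore lowers $S(Q)$. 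It is thus enough to show that MUBs minimize $\mathcal{P}_{\max}$ and maximize $\mathcal{P}_{\min}$ over the set of all collections of $N+1$ orthonormal bases in $\mathcal{H}_N$.

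Second, I would invoke the $2$-design identity valid for $N+1$ MUBs in $N=p^{k}$ dimensions,
\[
\sum_{k=1}^{N+1}\sum_{i=1}^{N}|e_i^{(k)}\rangle\langle e_i^{(k)}|\kron|e_i^{(k)}\rangle\langle e_i^{(k)}|=\1_{N^2}+U_{\rm SWAP}.
\]
Tracing against $\rho\kron\rho$ with $\rho=|\psi\rangle\langle\psi|$ yields $\sum_{k,i}(p_i^{(k)})^{2}=\tr(\rho^{2})+1=2$, so $\mathcal{P}^{\rm MUB}(\psi)=2/(N+1)^{2}$ is independent of $|\psi\rangle$, and therefore $\mathcal{P}^{\rm MUB}_{\min}=\mathcal{P}^{\rm MUB}_{\max}=2/(N+1)^{2}$.

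Third, I would apply a Haar-averaging argument to pin down the generic value. Using the standard fourth-moment formula $\langle|\langle\phi|\psi\rangle|^{4}\rangle_{\psi}=2/[N(N+1)]$ and summing over the $LN=(N+1)N$ POVM outcomes, one gets $\langle\mathcal{P}\rangle_\psi=2/(N+1)^{2}$ for \emph{any} collection of $L=N+1$ orthonormal bases. Since always $\mathcal{P}_{\min}\le\langle\mathcal{P}\rangle\le\mathcal{P}_{\max}$, this forces
\[
\mathcal{P}_{\min}\le\frac{2}{(N+1)^{2}}\le\mathcal{P}_{\max},
\]
with MUBs saturating both inequalities simultaneously. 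Combined with the monotonicity step, this shows that the MUB configuration maximizes $B_{\min}$ and minimizes $B_{\max}$, proving the conjecture as formulated with the bounds of Theorem~2.

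The main obstacle concerns the interpretation of ``the lower/upper bound for $\bar S$''. Read literally as the bounds $B_{\min}$ and $B_{\max}$ of Theorem~2, the argument above is complete. If instead those symbols denote the tight extrema $\min_\psi\bar S$ and $\max_\psi\bar S$, the purity route covers only the Tsallis-$2$ (linear-entropy) version, matching Appendix~\ref{app_var}; the Shannon case is genuinely harder, because $\bar S$ depends on the full probability vector rather than on $\mathcal{P}$ alone, and is not even constant in $\psi$ for MUBs (a direct check at $N=2$, $L=3$ comparing $|\psi\rangle=|0\rangle$ with a generic Bloch direction shows the range is nontrivial). A natural follow-up strategy would be a local variational analysis at the MUB point, using its Heisenberg--Weyl symmetry to show that $\bar S_{\max}-\bar S_{\min}$ is stationary there with a positive semidefinite Hessian transverse to gauge directions, followed by a globalization argument via the concentration indicated by the numerics in Fig.~\ref{fig:MUB}; this is where the real difficulty lies.
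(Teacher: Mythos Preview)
The paper does not prove this statement; it is explicitly labeled a \emph{Conjecture}, supported only by the numerical evidence of Figs.~\ref{fig:SN2L3}, \ref{fig22}, \ref{fig:SN3L4}. There is no paper-proof to compare against.

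Your final paragraph correctly locates the crux. In context the conjecture refers to the tight extrema $\bar S_{\min}=\min_\psi\bar S$ and $\bar S_{\max}=\max_\psi\bar S$ (the solid curves in the figures), not to the analytical bounds $B_{\min}$, $B_{\max}$ of Theorem~2: the authors base it on ``numerical results'' and, immediately after Conjecture~2, concede they cannot handle the Shannon case even for the variance, retreating to Tsallis-$2$ in Appendix~\ref{app_var}. For this intended reading your purity argument does not reach the target, as you yourself flag.

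Your proof of the $B_{\min}$/$B_{\max}$ reading, however, is correct and goes beyond what the paper establishes. The three steps --- monotonicity of the Harremo\"es--Tops{\o}e bounds in their purity inputs, the $2$-design identity forcing $\mathcal{P}^{\rm MUB}(\psi)\equiv 2/(N{+}1)^2$, and the Haar average pinning $\langle\mathcal{P}\rangle_\psi=2/(N{+}1)^2$ for \emph{every} collection of $N{+}1$ bases --- combine to show that MUBs simultaneously minimize $\mathcal{P}_{\max}$ and maximize $\mathcal{P}_{\min}$, hence extremize $B_{\min}$ and $B_{\max}$. (Equivalently, $\tr M=2L(N{-}1)$ is configuration-independent, so the spectrum of the $(N^2{-}1)\times(N^2{-}1)$ matrix $M$ collapses to the single value $2L/(N{+}1)$ iff $M\propto\1$, which for $N{+}1$ orthonormal bases is equivalent to mutual unbiasedness.) The same averaging argument in fact proves the Tsallis-$2$ version of Conjecture~1 for the \emph{tight} extrema, since $\bar T_2=1-L\mathcal{P}$; this is distinct from, and complementary to, the variance result the paper proves in Appendix~\ref{app_var}. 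The Shannon-entropy conjecture about the tight range of $\bar S$ remains open.
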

\begin{conjecture} The standard deviation of the averaged entropy 
$\Delta \bar S=\sqrt{\langle {\bar S}^{2}\rangle_{\psi}-\langle  {\bar S} \rangle_{\psi}^{2}}$,
averaged over the entire set of pure states of size $N$ is minimal
if the collection of $L=N+1$ unitary matrices forms a MUB.
\end{conjecture}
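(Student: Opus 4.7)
The plan is to reduce the claim to an extremal problem for a scalar functional on the Birkhoff polytope of doubly stochastic matrices. By unitary invariance of the Haar measure on $\mathbb{C}P^{N-1}$, the first moment $\langle \bar S\rangle_\psi=\Psi(N+1)-\Psi(2)$ is independent of the chosen bases, so minimizing $\Delta \bar S$ amounts to minimizing $\langle \bar S^{2}\rangle_\psi$. Expanding
\begin{equation}
\langle \bar S^{2}\rangle_{\psi}-\langle \bar S\rangle_{\psi}^{2}=\frac{1}{L^{2}}\sum_{j,k=1}^{L}\mathrm{Cov}_{\psi}(S_{j},S_{k}),
\end{equation}
the diagonal terms $\mathrm{Var}_\psi(S_j)$ are also basis-independent, so only the off-diagonal sum $\sum_{j\neq k}\mathrm{Cov}_\psi(S_j,S_k)$ is subject to optimization.

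Next I would show that each pairwise covariance $\mathrm{Cov}_\psi(S_j,S_k)$ depends on $V_{jk}:=U_j U_k^{\dagger}$ only through its doubly stochastic \emph{unbiasedness profile} $B^{(jk)}_{ii'}:=|(V_{jk})_{ii'}|^{2}$. This follows because the joint distribution of the pair of probability vectors $(p^{(j)},p^{(k)})$ under Haar-uniform $|\psi\rangle$ is invariant under the diagonal phase groups acting on rows and columns of $V_{jk}$, leaving only $B^{(jk)}$. For any complete set of $N+1$ MUBs in prime-power dimension every pair satisfies $B^{(jk)}=J/N$ simultaneously (the flat doubly stochastic matrix), so the conjecture will follow once one shows that $\Phi:B\mapsto\mathrm{Cov}_\psi(S_j,S_k)$, regarded as a function on the Birkhoff polytope, attains its minimum at $B=J/N$.

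To derive $\Phi$ explicitly I would invoke the Schur--Weyl decomposition for Haar moments on $\mathbb{C}P^{N-1}$. In the Tsallis order-two counterpart the computation collapses to an affine function of $\sum_{ii'}B_{ii'}^{2}$, whose strict convexity, together with permutation symmetry, forces the minimum at $J/N$; this is the content of Appendix~\ref{app_var}. For Shannon entropy I would use the Mellin identity $-x\log x=-\partial_s x^{s+1}|_{s=0}$ to reduce $\langle S_j S_k\rangle_\psi$ to a $\partial_s$-derivative of polynomial Haar moments in $B$. The polynomial pieces are symmetric functions of the entries of $B$, and a majorization argument based on the fact that the uniform row/column distribution of $J/N$ is majorized by every row and column of any doubly stochastic matrix, combined with Schur-concavity of Shannon entropy, would pin down the minimum at the flat profile.

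The main obstacle is precisely this last step: unlike the polynomial Tsallis case, the Shannon functional couples rows and columns through the Haar integration, so ordinary one-sided Schur-concavity does not suffice. I would attack it along three parallel routes: a direct variational check that the gradient of $\Phi$ at $J/N$ vanishes along every Birkhoff tangent direction and that the Hessian is positive semi-definite; an interpolation argument matching the Tsallis order-$q$ results for $q>1$ (where convexity is transparent) with the limit $q\to 1$, controlling the remainder uniformly in $B$; and a combinatorial majorization argument exploiting the doubly stochastic structure of $B$ to reduce to a one-dimensional convex inequality, in the spirit of Birkhoff--von Neumann decompositions. Since the numerical evidence reported in the paper already confirms the statement in every prime-power dimension tested, each of these routes stands a realistic chance of promoting the conjecture to a theorem.
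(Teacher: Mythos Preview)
The statement you are addressing is labelled a \emph{conjecture} in the paper, and the paper explicitly does not prove it: the authors write that they are ``not being able to prove conjecture 2 for the Shannon entropy'' and instead establish only the Tsallis order-two analogue in Appendix~\ref{app_var}. So there is no paper proof to compare your proposal against.

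Your reduction is sound and in fact mirrors the paper's Tsallis-$2$ argument closely. The observations that $\langle \bar S\rangle_\psi$ and each $\mathrm{Var}_\psi(S_j)$ are basis-independent by Haar invariance, so that only $\sum_{j\neq k}\mathrm{Cov}_\psi(S_j,S_k)$ matters, are exactly the first moves of Appendix~\ref{app_var}. Your claim that $\mathrm{Cov}_\psi(S_j,S_k)$ depends on $V_{jk}=U_jU_k^\dagger$ only through the unistochastic profile $B^{(jk)}_{ii'}=|(V_{jk})_{ii'}|^2$ is also correct, and your phase-invariance justification can be made rigorous: in the product $S_jS_k=\sum_{i,i'}(-p_i\log p_i)(-q_{i'}\log q_{i'})$ each summand involves a \emph{single} row $i'$ of $V_{jk}$, and averaging over the diagonal phase torus in $\phi$ eliminates the row phases, leaving only the moduli. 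This is a slightly cleaner formulation than the paper's, which proceeds by brute-force Weingarten calculus and only a posteriori observes that the answer involves $\sum_{ij}|u_{ij}|^4$.

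The genuine gap is the one you yourself flag: for Shannon entropy you have not shown that $\Phi(B)=\mathrm{Cov}_\psi(S_j,S_k)$ attains its minimum on the Birkhoff polytope at $B=J/N$. The paper's Appendix~\ref{app_var} succeeds for Tsallis-$2$ precisely because $\Phi$ collapses to an affine function of $\sum_{ij}B_{ij}^2$, which is manifestly minimized at the flat matrix; no such collapse occurs for Shannon. Your three suggested attacks (local Hessian at $J/N$, Tsallis-$q$ interpolation as $q\to 1$, Birkhoff--von~Neumann majorization) are all plausible, but none is executed, and each faces a real obstacle: the Hessian check gives only a local statement; the $q\to1$ limit requires uniform control of the remainder that is not obvious; and row-wise Schur-concavity is insufficient because the Haar average couples rows through the simplex constraint $\sum_m|\phi_m|^2=1$. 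In short, your proposal reproduces the paper's partial progress and correctly isolates the missing step, but it does not close it --- which is consistent with the statement's status as an open conjecture.
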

Not being able to prove conjecture 2 for the Shannon entropy
we provide in Appendix C a proof of an analogous proposition 
formulated in terms of the Tsallis entropy of order two.
This result contributes to our understanding of the special 
role mutually unbiased bases play in the theory of quantum measurement.

The second key goal of this work was to establish a closer
link between entropic uncertainty relations and the theory of
quantum entanglement. For any composed Hilbert space
${\cal H} = {\cal H}_{N} \otimes {\cal H}_{N}$,
the corresponding product basis $|i,j\rangle$ 
and a global unitary gate $U\in \mathcal{U}(N^2)$
one can investigate entanglement with respect to 
the transformed bases, $U|i,j\rangle$, with $ i,j=1,\dots N$.
For any pure state $|\psi\rangle$ of a bipartite
system we analyzed its average entanglement $\bar E$
with respect to several choices of the 
separable bases, linked by unitaries $U_1, \dots, U_L$,
and investigated lower and upper bounds for this quantity.

In the case of two--qubit system the average entanglement
for $L=2$ can attain the limiting value $\log 2$, 
as a state {\sl mutually entangled} 
with respect to both splittings exists.
This result follows from the fact that the set of 
two--qubit maximally entangled states, 
equivalent to the real projective space $\mathbb{R}P^{3}$, 
is non-displacable in $\mathbb{C}P^{3}$
with respect to the action of $\mathcal{U}(4)$.
Numerical results allow us to conjecture that a similar
statement holds also in higher dimensions. 

It is worth to emphasize that nondisplacability of real 
projective spaces in the corresponding complex projective space \cite{Tam08} 
admits  other applications. Consider 
$N=3$ dimensional space corresponding to 
angular momentum $j=(N-1)/2=1$
and the set $\cal C$ of $\mathcal{SU}(2)$--coherent states
obtained by the rotating the 
maximal weight state $|j,j\rangle=|1,1\rangle$
by Wigner rotation matrix \cite{ZFG90}.
In the stellar representation these states
are described by two stars coinciding into a 
single point of the sphere. The set $\cal A$ of 'anticoherent states',
which are as far from $\cal C$ as possible,
contains the state $|1,0\rangle$ represented 
by two stars in antipodal points at the sphere.
Hence the set $\cal A$ has the form of the real projective space
$\mathbb{R}P^{2}$, which is non-displacable in $\mathbb{C}P^{2}$
with respect to the action of $\mathcal{U}(3)$.
This implies that the sets $\cal A$ and ${\cal A}'=U({\cal A})$
do intersect, so there exists a pure state 
anticoherent with respect to any two choices 
of the maximal weight state.

Let us conclude the paper with a short list of 
open questions. It is a challenge to improve 
explicit 'certainty relations': {\sl upper}
bounds for the average entropy obtained
for $L\ge 3$ measurements
with respect to arbitrary orthogonal bases. 
In the case of MUB the upper bounds of Sanchez \cite{Sa93}
occur to be rather precise, so it is more likely 
to improve his lower bounds. It would be 
interesting to derive analogous lower and upper bounds 
for the averaged entanglement of a bipartite state
with respect to $L\ge 3$ different splittings 
of the Hilbert space and to prove existence of 
mutually entangled states for the general  
$N \times N$ problem.

\medskip

Acknowledgments. It is our pleasure to thank  I. Bengtsson,
P. Horodecki, M. Ku\'{s}, P. Nurowski, M. Plenio,
W. S{\l}omczy{\'n}ski, A. Szymusiak  and L. Vaidman 
for fruitful discussions.  We are grateful to Magdalena Stobi{\'n}ska and Marek {\.Z}ukowski
for giving us an opportunity to present these results during the International
Conference on
Squeezed States and Uncertainty Relations ICSSUR15 held in Gda\'nsk. We also thank the organizing comitee of ICSSUR2015
for letting us include and modify the inspiring logo of this conference (see Fig. \ref{figICSSUR}). We acknowledge financial support by the Polish National Science
Centre (NCN) under the grant number DEC-2012/04/S/ST6/00400 (Z.P)
and the John Templeton Foundation (K.{\.Z}).
\L .R. acknowledges financial support by the grant number 2014/13/D/ST2/01886
of the National Science Center, Poland. Research in Freiburg is supported
by the Excellence Initiative of the German Federal and State Governments
(Grant ZUK 43), the Research Innovation Fund of the University of
Freiburg, the ARO under contracts W911NF-14-1-0098 and W911NF-14-1-0133
(Quantum Characterization, Verification, and Validation), and the
DFG (GR 4334/1-1).

\appendix


\section{Mutually entangled states and mutually separable states for two qubits}
In this Appendix we demonstrate existence of 
mutually entangled states and mutually separable states
in the two qubit case.
Without loss of generality we may consider two matrices $\id$ and
$W_2 \in \mathcal{U}(4)$, which is brought by local unitary transformations into 
its canonical form \cite{KC01,HVC02},
\begin{equation}\label{eqn:unitary-canonical}
\! W_2 \! = \! \! \left(
\begin{smallmatrix}
 e^{i b_3} \cos \left(b1\right) & 0 & 0 & i e^{i b_3} \sin \left(b_1\right) \\
 0 & e^{-i b_3} \cos \left(b_2\right) &  i e^{- i b_3} \sin \left(b_2\right)  & 0 \\
 0 &   i e^{- i b_3} \sin \left(b_2\right) & e^{-i b_3} \cos \left(b_2\right) & 0 \\
 i e^{i b_3} \sin \left(b_1\right) & 0 & 0 & e^{i b_3} \cos \left(b_1\right)
\end{smallmatrix}
\right)\!\!\!\!
\end{equation}
parameterized by three real parameters $b_1,b_2,b_3$. 
Next we find vectors $\ket{x}$ and $\ket{y}$ such that 
\begin{equation}
\begin{split}
\frac12 E(\ket{x}) + \frac12 E(W_2 \ket{x}) &= 0 \\
\frac12 E(\ket{y}) + \frac12 E(W_2 \ket{y}) &= \log 2,
\end{split}
\end{equation} 
i.e. $\ket{x}$ is separable in both bases and $\ket{y}$ is maximally entangled
in both bases. 

We see immediately, that we can take $\ket{y} =  (\ket{0,0} + \ket{1,1})/\sqrt{2}$.
To show  existence of a mutually separable vector we consider two cases.
If $b_2 = 0$ we may take $\ket{x} = \ket{0,1}$ and in opposite case we may take 
$\ket{x}$ to be proportional to 
\begin{equation}
\ket{x} \simeq  \ket{0} \otimes   
\left(
\ket{0} + 
e^{2 i b_3} \sqrt{\frac{ \sin(b_1) \cos(b_1) }{ \sin(b_2) \cos(b_2)}} \ \ket{1}
\right).
\end{equation}

\section{Examples of mutually entangled bases}

We provide here exemplary collections of three
unitary matrices of order $2^2$ and four unitary matrices of order $3^2$, which
form mutually entangled bases (see Definition~\ref{def:MEB}).

A collection of three mutually entangled bases for two qubits reads,
\begin{equation}
W_1=\id_4 ,
W_2=
\frac{1}{\sqrt{2}} \left(
\begin{smallmatrix}
 1 & 0 & 0 & 1 \\
 0 & 1 & 1 & 0 \\
 0 & 1 & -1 & 0 \\
 1 & 0 & 0 & -1 \\
\end{smallmatrix}
\right),
W_3=
\frac{1}{\sqrt{2}}
\left(
\begin{smallmatrix}
 1 & 0 & 0 & 1 \\
 0 & 1 & 1  & 0 \\
 0 & i & -i & 0 \\
 i & 0 & 0 & -i
\end{smallmatrix}
\right).
\label{meb2}
\end{equation}

In the case of two-qutrit system, four mutually entangled bases are:
\begin{equation}
\begin{split}
&W_1 =  \id_9,\quad W_2=
 \frac{1}{\sqrt{3}}\left(
\begin{smallmatrix}
 1 & 0 & 0 & 0 & 1 & 0 & 0 & 0 & 1 \\
 0 & \omega  & 0 & 0 & 0 & \omega^2  & 1 & 0 & 0 \\
 0 & 0 & \omega & 1 & 0 & 0 & 0 & \omega^2 & 0 \\
 0 & 0 & 1 & 1 & 0 & 0 & 0 & 1 & 0 \\
 1 & 0 & 0 & 0 & \omega & 0 & 0 & 0 & \omega^2 \\
 0 & \omega^2 & 0 & 0 & 0 & \omega & 1 & 0 & 0 \\
 0 & 1 & 0 & 0 & 0 & 1 & 1 & 0 & 0 \\
 0 & 0 & \omega^2 & 1 & 0 & 0 & 0 & \omega & 0 \\
 1 & 0 & 0 & 0 & \omega^2 & 0 & 0 & 0 & \omega 
\end{smallmatrix}
\right)\!,\, W_{3,4}=\\
&
\frac{1}{\sqrt{3}}\left(
\begin{smallmatrix}
 1 & 0 & 0 & 0 & 1 & 0 & 0 & 0 & 1 \\
 0 & \omega^2 & 0 & 0 & 0 & 1 & \omega & 0 & 0 \\
 0 & 0 & \omega^2 & \omega & 0 & 0 & 0 & 1 & 0 \\
 0 & 0 & 1 & 1 & 0 & 0 & 0 & 1 & 0 \\
 \omega & 0 & 0 & 0 & \omega^2 & 0 & 0 & 0 & 1 \\
 0 & 1 & 0 & 0 & 0 & \omega^2 & \omega & 0 & 0 \\
 0 & 1 & 0 & 0 & 0 & 1 & 1 & 0 & 0 \\
 0 & 0 & 1 & \omega & 0 & 0 & 0 & \omega^2 & 0 \\
 \omega & 0 & 0 & 0 & 1 & 0 & 0 & 0 & \omega^2
\end{smallmatrix}
\right), 
\frac{1}{\sqrt{3}}\left(
\begin{smallmatrix}
 1 & 0 & 0 & 0 & 1 & 0 & 0 & 0 & 1 \\
 0 & 1 & 0 & 0 & 0 & \omega & \omega^2 & 0 & 0 \\
 0 & 0 & 1 & \omega^2 & 0 & 0 & 0 & \omega & 0 \\
 0 & 0 & 1 & 1 & 0 & 0 & 0 & 1 & 0 \\
 \omega^2 & 0 & 0 & 0 & 1 & 0 & 0 & 0 & \omega \\
 0 & \omega & 0 & 0 & 0 & 1 & \omega^2 & 0 & 0 \\
 0 & 1 & 0 & 0 & 0 & 1 & 1 & 0 & 0 \\
 0 & 0 & \omega & \omega^2 & 0 & 0 & 0 & 1 & 0 \\
 \omega^2 & 0 & 0 & 0 & \omega & 0 & 0 & 0 & 1 
\end{smallmatrix}
\right)
\end{split}
\label{meb3}
\end{equation}
where $\omega = e^{ 2 \pi i /3 }$ is the cubic root of unity. 

\section{ Variance of measurement outcomes}
\label{app_var}

Consider $L$ orthogonal measurements performed 
on a quantum state of size $N$. Numerical results suggest 
that the variance of the average entropy characterizing the measurements 
is minimal if all measurement bases are mutually unbiased.
In this appendix we prove this statement for the Tsallis entropy or order two,
earlier used for purpose of uncertainty relations \cite{Rastegin,Ra15}.

The Tsallis entropy $T_\beta$  of order $\beta > 0$ is defined as
\begin{equation}
T_\beta (p) = \frac{1}{\beta - 1} \left( 1 - \sum p_i^\beta \right).
\end{equation}
and reduces to the Shannon entropy as $\alpha \to 1$.

Consider an arbitrary unitary matrix $U_i \in {\cal U}(N)$  defining a bases, 
in which an orthogonal measurement is performed.
For any pure state $|\psi\rangle$ we introduce 
the corresponding vector of probabilities
\begin{equation}
p^{(i)}_j = |\bra{j} U_i \ket{\psi}|^2,
\label{prob3}
\end{equation}
described by the Tsallis entropy $T_2$, 
\begin{equation}
T^{(i)} = T_2 (p^{(i)}) =1 - \sum_j \left(p_j^{(i)}\right)^2.
\label{tsal3}
\end{equation}
Assume now that $\ket{\psi}$ is 
a random pure state distributed according to the unitary invariant Haar
measure. We can now average the mean Tsallis entropy over the
entire set of pure states and analyze its variance.

\begin{theorem}
Let $U_1=\id, U_2, \dots , U_L$ be a collection of
$L$ unitary matrices of order $N$,  which for any 
state $|\psi\rangle$ leads to the set of probability vectors 
(\ref{prob3}) described by the Tsallis entropy (\ref{tsal3})
and its mean value
${\bar T}=(T^{(1)} + T^{(2)} + \dots + T^{(L)})/L$. 
If the set of $L$ MUBs in dimension $N$ exists
then the variance of the mean entropy, 
${\rm var} ( {\bar T})$ 
averaged over the set of all pure states in ${\cal H}_N$ 
is minimal if matrices $\{ U_j\}_{j=1}^L $ are mutually unbiased.
\end{theorem}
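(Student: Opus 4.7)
\noindent\emph{Proof proposal.} The plan is to reduce the variance of $\bar T$ to a variance of a polynomial quadratic in the probabilities, evaluate it via Haar integration of the pure state, and identify the only basis-dependent term, which will turn out to be controlled by Cauchy--Schwarz.

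First I would write $\bar T = 1 - P/L$ with
\[
P \;=\; \sum_{i=1}^{L}\sum_{j=1}^{N}\bigl(p_{j}^{(i)}\bigr)^{2}
\;=\; \sum_{i,j}\bigl|\langle j|U_{i}|\psi\rangle\bigr|^{4},
\]
so that $\mathrm{var}(\bar T) = L^{-2}\,\mathrm{var}(P)$. Thanks to unitary invariance of the Haar measure on pure states, each single term $\mathbb{E}\bigl[\bigl|\langle j|U_{i}|\psi\rangle\bigr|^{4}\bigr]$ equals $2/[N(N+1)]$ independently of $U_i$, so $\mathbb{E}[P]=2L/(N+1)$ contributes only a basis-independent offset. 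Hence the whole problem is to extract the basis dependence of $\mathbb{E}[P^{2}]$.

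Next I would compute $\mathbb{E}[P^{2}]$ using the standard Haar identity
\[
\mathbb{E}\bigl[\bigl|\langle a|\psi\rangle\bigr|^{4}\bigl|\langle b|\psi\rangle\bigr|^{4}\bigr]
\;=\;\frac{4+16|\langle a|b\rangle|^{2}+4|\langle a|b\rangle|^{4}}{N(N+1)(N+2)(N+3)},
\]
which follows by Wick-pairing the eight factors of $\psi$ and $\bar\psi$ (three types of pairings: both $a$-positions to $a$-positions, both to $b$-positions, and the mixed case, giving multiplicities $4$, $4$ and $16$). Setting $|a\rangle = U_i^{\dagger}|j\rangle$, $|b\rangle = U_{i'}^{\dagger}|j'\rangle$ one finds $|\langle a|b\rangle|^{2} = |(U_i U_{i'}^{\dagger})_{jj'}|^{2}$, so summing over $i,i',j,j'$ and using unitarity (row/column sums of $|(U_i U_{i'}^{\dagger})_{jj'}|^{2}$ equal $1$) gives
\[
\mathbb{E}[P^{2}] \;=\; \frac{4L^{2}(N^{2}+4N)+4\sum_{i,i'}S_{ii'}}{N(N+1)(N+2)(N+3)},
\]
where $S_{ii'} := \sum_{j,j'}|(U_i U_{i'}^{\dagger})_{jj'}|^{4}$. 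The $i=i'$ terms give $S_{ii}=N$, constant; only $\sum_{i\neq i'}S_{ii'}$ carries the basis dependence.

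The final step is Cauchy--Schwarz: for any $N\times N$ unitary $V$ one has $\sum_{j,j'}|V_{jj'}|^{2}=N$, hence
\[
S(V)\;=\;\sum_{j,j'}|V_{jj'}|^{4}\;\geq\;\frac{\bigl(\sum_{j,j'}|V_{jj'}|^{2}\bigr)^{2}}{N^{2}}\;=\;1,
\]
with equality iff all $|V_{jj'}|^{2}$ coincide, i.e.\ $V$ is a complex Hadamard matrix. Applying this to $V=U_iU_{i'}^{\dagger}$, each off-diagonal $S_{ii'}\geq 1$ with equality iff the pair $(U_i,U_{i'})$ is unbiased. Therefore $\sum_{i\neq i'}S_{ii'}\geq L(L-1)$, saturated precisely when every pair of bases is unbiased, i.e.\ when $\{U_j\}_{j=1}^{L}$ forms a MUB collection. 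Since $\mathrm{var}(\bar T)$ is a strictly increasing affine function of this sum, this proves the claim. The main obstacle I anticipate is the combinatorial bookkeeping of the eight-point Haar moment; once that identity is in place, the remainder is a one-line Cauchy--Schwarz argument.
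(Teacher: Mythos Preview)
Your proof is correct and follows essentially the same route as the paper: both compute the fourth Haar moment $\mathbb{E}\bigl[|\langle a|\psi\rangle|^{4}|\langle b|\psi\rangle|^{4}\bigr]$ (the paper via Weingarten calculus, you via direct Wick pairing), identify the only basis-dependent contribution as $\sum_{j,j'}|(U_iU_{i'}^{\dagger})_{jj'}|^{4}$, and conclude by the same convexity/Cauchy--Schwarz lower bound, saturated exactly for complex Hadamard transition matrices. Your presentation is slightly more streamlined---you treat all $L$ bases at once instead of first reducing to pairs, and you make the Cauchy--Schwarz step explicit where the paper merely asserts the minimizer---but the substance is identical.
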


\begin{proof}
Note, that to prove this we can restrict our attention to the case of two
matrices unitary. For convenience we denote $U_1\equiv \id$, $U_2 \equiv U$ and 
\begin{equation}
p_i = |\scalar{i}{\psi}|^2, \qquad q_j = |\bra{j} U \ket{\psi}|^2.
\end{equation}
Next we write
\begin{equation}
\begin{split}
&{\rm var}( T_2 (p) + T_2 (q)) = 
\langle (T_2 (p) + T_2 (q))^2 \rangle- 
\langle  T_2 (p) + T_2 (q)\rangle^2 \\
&=
\langle  T_2^2 (p) \rangle  + \langle T_2^2 (q)\rangle 
 - \langle T_2 (p) + T_2 (q)\rangle^2 + 2 \langle T_2 (p) T_2 (q) \rangle . 
\end{split}
\end{equation}
Unitary invariance of the distribution of  $\ket{\psi}$
implies that the first three terms do not depend on $U$, so
to get the minimum value of the variance one has to minimize the last term 
$\langle  T_2 (p) T_2 (q) \rangle $. Let us rewrite it in the form 
\begin{equation}
\begin{split}
&\langle  T_2 (p) T_2 (q) \rangle  =\
\langle  (1 - \sum p_i^2) (1 - \sum q_i^2) \rangle \\
&= 1 -  \langle \sum p_i^2 \rangle  - \langle \sum q_i^2 \rangle + \langle \sum p_i^2 \sum q_j^2 \rangle .
\end{split}
\end{equation}
To get the minimum one should minimize the average $\langle \sum p_i^2 \sum q_j^2 \rangle $,
which consist of the following terms 
\begin{equation}
\begin{split}
\langle  p_i^2 q_j^2 \rangle 
=
\langle |\psi_i|^4 |(U\ket{\psi})_j|^4 \rangle .
\end{split}
\end{equation}
Treating the vector $\ket{\psi}$ as a first column of a random unitary matrix
distributed according to the Haar measure, 
we can use Weingarten calculus \cite{Weing} and obtain the following value
\begin{eqnarray}
&& \!\!\! \langle |\psi_i|^4 |(U\ket{\psi})_j|^4 \rangle = 
\frac{(N-1)! 4!}{(N+3)!}  
\Big(
|u_{ji}|^4 
+  |u_{ji}|^2 \sum_{k \neq i} |u_{jk}|^2 \nonumber
\\
& &
+  \frac16 \sum_{k=1}^N |u_{jk}|^4 
+  \frac16 \sum_{k\neq l \atop k,l\neq i}^N  |u_{jk}|^2  |u_{jl}|^2
\Big).
\end{eqnarray}
The above result implies that
\begin{widetext}
\begin{equation}
\begin{split}
\langle  \sum_i p_i^2 \sum_j q_j^2 \rangle 
&=
\frac{(N-1)! 4!}{(N+3)!}  
\left(
\left( 1 + \frac{(N-1)}{6} \right) \sum_{ij} |u_{ij}|^4
+
\left( 1 + \frac{(N-2)}{6} \right)
\sum_i \sum_{k\neq l} |u_{ik}|^2 |u_{il}|^2
\right)\\
&=
\frac{(N-1)! 4!}{(N+3)!}  
\left(
\frac16 \sum_{ij} |u_{ij}|^4 +
\left(1+ \frac{N-2}{6}\right) \sum_i \sum_{kl} |u_{ik}|^2 |u_{il}|^2
\right)\\
&=
\frac{(N-1)! 4!}{(N+3)!}  
\left(
\frac16 \sum_{ij} |u_{ij}|^4 +
\left(1+ \frac{N-2}{6}\right) N
\right).
\end{split}
\end{equation}
\end{widetext}
It is  now easy to conclude that the above expression is minimized for
$|u_{ij}|^2 =
1/N$, i.e. for $U$ being unbiased with identity. The same reasoning
applied for $L ( L - 1 ) / 2$ pairs of measurements 
gives us, that the variance of the sum of all measurements will be minimal if
all matrices are mutually unbiased.
\end{proof}

\end{document}